




\documentclass{bmcart}

\usepackage{amssymb,xspace,amsmath,amsthm}
\usepackage[utf8]{inputenc}
\usepackage{bm}
\usepackage{url} 
\usepackage{color}
\usepackage{multirow}
\usepackage{mathpazo}

\usepackage{lscape}



\usepackage[pdftex]{graphicx}
\usepackage[final]{pdfpages}
\usepackage{graphicx}

\startlocaldefs
\newtheorem{definition}{Definition}

\newtheorem{lemma}{Lemma}

\newtheorem{theorem}{Theorem}



\endlocaldefs

\endlocaldefs

\begin{document}

\begin{frontmatter}

\begin{fmbox}
\dochead{Research}


\title{Aligning coding sequences with frameshift extension penalties}


\author[
   addressref={aff1},
   corref={aff1},     
   email={safa.jammali@USherbrooke.ca}  
]{\inits{SJ}\fnm{Safa} \snm{Jammali}}
\author[
   addressref={aff1},
   email={esaie.kuitche.kamela@USherbrooke.ca}   
]{\inits{FB}\fnm{Esaie} \snm{Kuitche}}
\author[
   addressref={aff1}, 
   email={ayoub.rachati@USherbrooke.ca}   
]{\inits{AR} \fnm{Ayoub} \snm{Rachati}}
\author[
   addressref={aff1},
   email={francois.belanger3@USherbrooke.ca}   
]{\inits{FB}\fnm{Fran\c{c}ois} \snm{B\'elanger}}
\author[
   addressref={aff2}, 
   email={michelle.scott@USherbrooke.ca}   
]{\inits{MS}\fnm{Michelle} \snm{Scott}} 
\author[
   addressref={aff1}, 
   email={aida.ouangraoua@USherbrooke.ca}   
]{\inits{AO}\fnm{ A\"ida} \snm{Ouangraoua}} 


\address[id=aff1] {
  \orgname{D\'epartement d'informatique, Facult\'e des Sciences,  Universit\'e de Sherbrooke, Sherbrooke},  
  \postcode{J1K2R1}    
  \city{QC},  
  \cny{Canada}  
}
\address[id=aff2] {
  \orgname{D\'epartement de biochimie, Facult\'e  de m\'edecine et des sciences de la sant\'e, Universit\'e de Sherbrooke, Sherbrooke},  
  \postcode{J1E4K8}    
  \city{QC},  
  \cny{Canada}  
}


\begin{artnotes}
\end{artnotes}

\end{fmbox} 


\begin{abstractbox}

\begin{abstract} 
\parttitle{Background} 
  Frameshift translation is an important phenomenon that contributes to the appearance
  of novel Coding DNA Sequences (CDS) and functions in gene evolution, by allowing 
  alternative amino acid translations of gene coding regions. 
  
  Frameshift translations can be identified by aligning two CDS, from a same gene or from homologous genes, while accounting for their 
  codon structure. Two main classes of algorithms have been proposed to solve the problem of 
  aligning CDS, either by amino acid sequence alignment back-translation, or by simultaneously accounting 
  for the nucleotide and amino acid levels. The former does not allow to 
  account for frameshift translations and up to now, the latter exclusively accounts for frameshift
  translation initiation, not considering the length of the translation disruption 
  caused by a frameshift.
\parttitle{Results} 
  We introduce a new scoring scheme with an algorithm for the 
  pairwise alignment of CDS accounting for frameshift translation initiation and length, while simultaneously considering nucleotide and amino acid sequences. The main specificity 
  of the scoring scheme is the introduction of a penalty cost accounting for frameshift 
  extension length to compute an adequate similarity score for a CDS alignment.  The second 
  specificity of the model is that the search space of the problem solved is the set of all 
  feasible alignments between two CDS. Previous approaches have considered restricted search 
  space or additional constraints on the decomposition of an alignment into length-3
  sub-alignments. The algorithm described in this paper has the same
  asymptotic time complexity as the classical Needleman-Wunsch algorithm.
  
  \parttitle{Conclusions} 
  We compare the method to other CDS alignment methods based on an application to 
  the comparison of pairs of CDS from homologous \emph{human}, \emph{mouse} 
  and \emph{cow} genes of ten mammalian gene families from the Ensembl-Compara database.   
   The results show that our method is particularly robust to parameter changes as compared to existing methods. It also appears to be a good compromise, performing well
both in the presence and absence of frameshift translations.
  An implementation of the method is available at https://github.com/UdeS-CoBIUS/FsePSA.

\end{abstract}


\begin{keyword}
\kwd{Coding DNA sequences}
\kwd{Pairwise alignment}
\kwd{Frameshifts}
\kwd{Dynamic programming.}
\end{keyword} 


\end{abstractbox}
%

\end{frontmatter}




\section*{Background}
\label{intro}

Biological sequence alignment is a cornerstone of bioinformatics and is widely used in such fields as phylogenetic reconstruction, gene finding, genome assembly. The accuracy of the 
sequence alignments and similarity measures are directly related to the accuracy
of subsequent analysis. CDS alignment methods have many important applications for gene tree 
and protein tree reconstruction. In fact, they are useful to cluster homologous 
CDS into groups of orthologous splicing isoforms \cite{zambelli2010,barbosa2012} and combine 
partial trees on orthology groups into a complete protein tree for a gene family 
\cite{christinat2013,kuitche2017}. 
Aligning and 
measuring the similarity between homologous CDS requires to account for \emph{Frameshift (FS) 
translations} that cannot be detected at the amino acid (AA) level, but lead to a high 
similarity at the nucleotide level between functionnaly different sub-sequences. 

FS translation consists in alternative AA translations
of a coding region of DNA using different translation frames \cite{pruitt2009consensus}. 
It is an important phenomenon resulting from different scenarios such as, insertion or deletion 
of a nucleotide sequence whose length is not a multiple of $3$ in a CDS through alternative splicing \cite{okamura2006frequent, Barmak2003} or evolutionary genomic indels \cite{Stoffer1997, IKUO1991}, programmed ribosomal frameshifting \cite{Robin2012},  or sequencing errors \cite{We2013}. Recent studies have reported the role of FS translations in the appearance of novel CDS and functions in gene evolution \cite{okamura2006frequent, raes2005functional}. FS translation has also been found to
be linked to several diseases such as the Crohn's Disease \cite{ogu2001}. 
The computational detection of FS translations requires the alignment of CDS while accounting for their codon structure.
A classical approach for aligning two CDS used in most alignment tools \cite{abascal2010,morgenstern2004}
consists in a three-step method, where the CDS are first translated into
AA sequences using their actual coding frame, then AA sequences are aligned,
and finally the AA alignment is back-translated to a CDS alignment. 
This approach does not account for alternative AA translations between two CDS 
and it leads to incorrect alignment of the coding regions subject to FS translation. 
The opposite problem of aligning protein sequences
while recovering their hypothetical nucleotide CDS sequences and accounting for FS
translation was also studied in several papers
\cite{girdea2010back,moreira2004tip}.

Here, we consider the problem of aligning two CDS while accounting for FS
translation, by simultaneously accounting for their nucleotide and 
AA sequences. The problem has recently regained attention due to the increasing 
evidence for alternative protein production through FS translation by eukaryotic gene families \cite{ranwez2011,Danny2013}. 

The problem was first addressed by Hein et al. \cite{hein1994,pedersen1998}
who proposed a DNA/Protein model
such that the score of an alignment between two CDS of length $n$ and $m$
is a combination of its
score at the nucleotide level and its score at the AA level. They described
a $O(n^2m^2)$ algorithm in \cite{hein1994}, later improved to a $O(nm)$
algorithm in \cite{pedersen1998} for computing an optimal score alignment,
under the constraint that the search space of the problem is restricted.
Arvestad \cite{arvestad1997} later proposed a CDS alignment scoring model
with a $O(nm)$ alignment algorithm accounting for codon structures and FS
translations based on the concept of generalized substitutions introduced in
\cite{sankoff1983}. In this model, the score of a CDS alignment depends on
its decomposition into a concatenation of \emph{codon fragment} alignments,
such that a codon fragment of a CDS is defined as a substring of length $w$,
$0\leq w \leq 5$. This decomposition into codon fragment alignments allows
to define a score of the CDS alignment at the AA level. 
More recently, Ranwez et al. \cite{ranwez2011} proposed a simplification of
the model of Arvestad limiting the maximum length of a codon fragment to $3$.
Under this model, a $O(nm)$ CDS alignment algorithm was described and extended
in the context of multiple sequence alignment \cite{ranwez2011}. In the models
of Arvestad \cite{arvestad1997} and Ranwez et al. \cite{ranwez2011}, several
scores may be computed for the same alignment based on different decompositions
into codon fragment alignments. The corresponding algorithms
for aligning two CDS then consist in computing an optimal score decomposition
of an alignment between the two CDS. 
This optimal score exclusively  accounts for FS translation initiations, i.e a FS translation 
in an alignment is penalized by adding a constant FS cost, which only penalizes the initiation of the FS, not accounting for the length of this FS translation. 
However, taking account of FS translation lengths is important in order to increase the precision of CDS alignment scores, as these lengths induce more or less disruptions between the protein sequences.

In this paper, we propose the first alignment algorithm that accounts for both the initiation and the length of FS translations in order to compute the similarity scores of CDS alignments.
The remaining of the paper is organized as follows. 
In the "Motivation" section, we illustrate the importance of accounting for
FS translation length when aligning CDS.
In the "Preliminaries" section, we give some preliminary definitions and we introduce a
new CDS alignment scoring model with a self-contained definition of the score
of an alignment penalizing both the initiation and the extension of FS
translations. 
In the "Method" section, a dynamic programming algorithm
for computing an optimal score alignment between two CDS is described.
Finally, in the "Results" section,
we present and discuss the results of a comparison of our method
with other CDS alignment methods for a pairwise comparison
of CDS from homologous 
genes of ten mammalian gene families. 
%

\section*{ Motivation: Importance of accounting for FS translation length}
\label{motivation}
The two main goals of aligning biological sequences are to evaluate the
similarity and to identify similar regions between the sequences, used
thereafter to realize molecular analyses such as evolutionary, functional and 
structural predictions. 
In practice, CDS alignment can be used to 
exhaustively identify the conserved features of a set of proteins. 
Thus, the definition of 
CDS similarity must account for sequence conservation and disruptions 
at both the nucleotide and the protein levels.

\begin{figure}[ht!]
	\centering
 \includegraphics[width=0.8\textwidth]{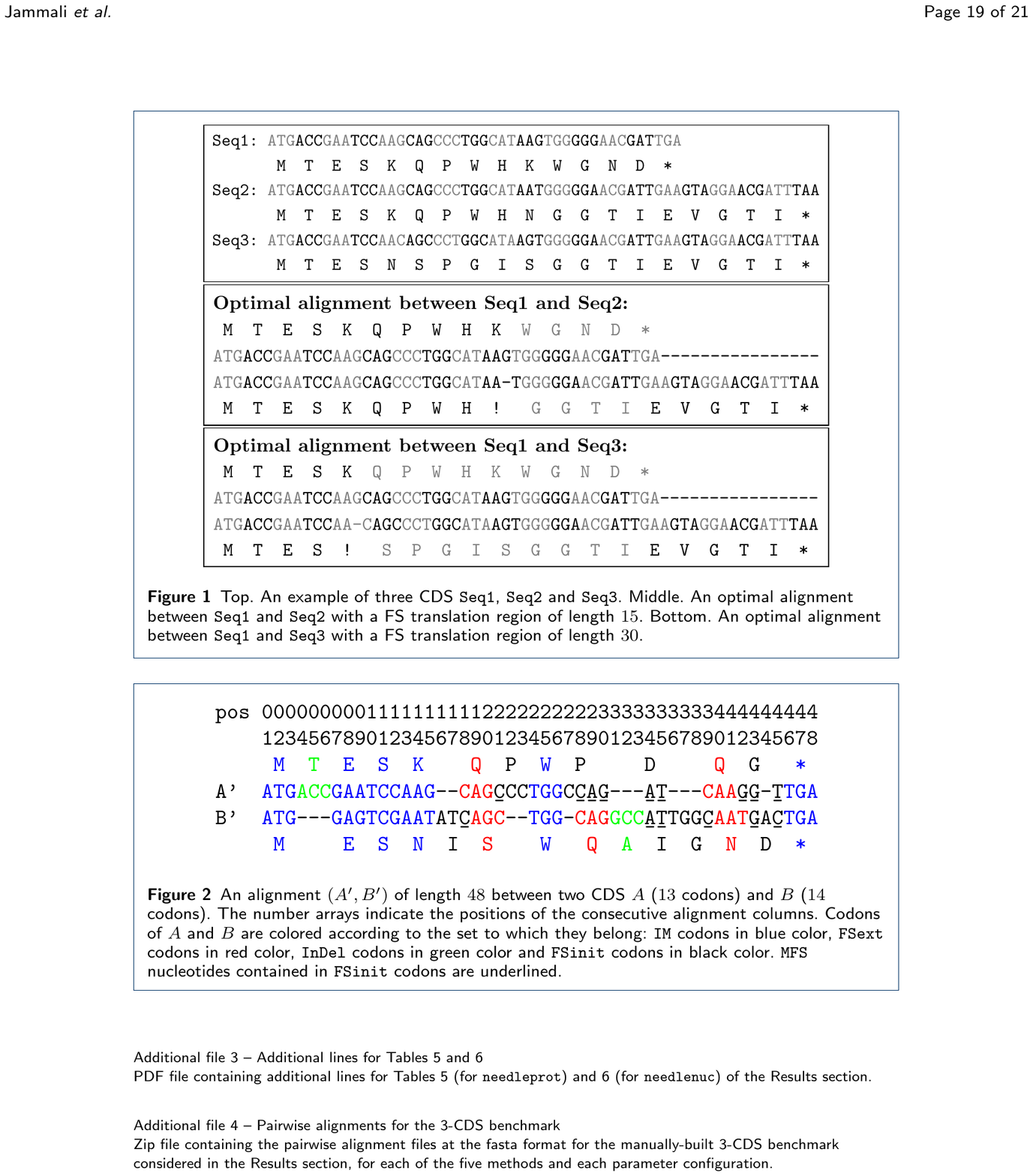}

	\caption{{\bf Top.} An example of three CDS \texttt{Seq1}, \texttt{Seq2} and \texttt{Seq3}. {\bf Middle.} An optimal alignment between \texttt{Seq1} and \texttt{Seq2} with a FS translation region of length $15$. {\bf Bottom.}
    An optimal alignment between \texttt{Seq1} and \texttt{Seq3} with a FS translation region of length $30$.
	}
	\label{fig:importance}
\end{figure}
Figure \ref{fig:importance} 
illustrates the importance of accounting for AA translations and FS translation length  in order to compute an adequate similarity score for a CDS alignment.
It describes an example of three CDS \texttt{Seq1}, \texttt{Seq2} and \texttt{Seq3}. \texttt{Seq1} has a length of $45$. The CDS \texttt{Seq2}
has length $60$ and is obtained from \texttt{Seq1} by deleting the
nucleotide 'G' at position $30$ and adding $16$ nucleotides at the end.
The CDS \texttt{Seq3} has length
$60$ and is obtained from \texttt{Seq1} by deleting the nucleotide 'G' at position $15$ and adding $16$ nucleotides at the end.


When looking at the AA translations of \texttt{Seq1}, \texttt{Seq2} and
\texttt{Seq3}, we observe that the similarity between \texttt{Seq2} and
\texttt{Seq1} is higher than the similarity between \texttt{Seq3} and
\texttt{Seq1} at the protein level, because \texttt{Seq1} and
\texttt{Seq2} share a longer AA prefix \texttt{"M~T~E~S~K~Q~P~W~H"}
(amino acids in black characters in the alignments).
However, the pairwise CDS alignment algorithms that do not account for the length
of FS translations would return the same score for the two following optimal
alignments of \texttt{Seq1} with \texttt{Seq2} and \texttt{Seq1} with
\texttt{Seq3}, penalizing only the initiation of one FS translation in both
cases (positions marked with a "!" symbol in the alignments), and not penalizing the sequence disruptions at the protein level.

%
From an evolutionary point of view, a good scoring model for evaluating the similarity 
between two CDS
in the presence of FS translations should then penalize not only the initiation of
FS but also the length of FS translations extension (amino acids in gray
characters in the alignments). The alignment of \texttt{Seq1} with \texttt{Seq2} would
then have a higher similarity score than the alignment of \texttt{Seq1} with
\texttt{Seq3}.
\section*{Preliminaries: Score of CDS alignment}
\label{preliminaries}
In this section, we formally describe a new definition
of the score of a CDS alignment that penalizes both the initiation and the
extension of FS translations.
\begin{definition}[Coding DNA sequence (CDS)]  \hfill\\
\label{def:CDS}
 A coding DNA sequence (CDS) is a DNA sequence on the alphabet of nucleotides
  $\Sigma_N=\{A,C,G,T\}$ whose length $n$ is a multiple of $3$. A coding
  sequence is composed of a concatenation of $\frac{n}{3}$ codons that are
  the words of length $3$ in the sequence ending at positions $3i$,
  $1 \leq i \leq \frac{n}{3}$. The AA translation of a CDS is a protein
  sequence of length $\frac{n}{3}$ on the alphabet $\Sigma_A$ of AA such that
  each codon of the CDS is translated into an AA symbol in the
  protein sequence. 
 \end{definition}
Note that, in practice an entire  CDS begins with a start codon \texttt{"ATG"} 
and ends with a stop codon \texttt{"TAA"}, \texttt{"TAG"} or \texttt{"TGA"}. 

\begin{definition}[Alignment between DNA sequences] \hfill\\
An alignment between two DNA sequences $A$ and $B$ is a pair
$(A',B')$ where $A'$ and $B'$ are two sequences of same length $L$
derived by inserting gap symbols $'-'$ in $A$ and $B$, such that
$\forall i, ~1 \leq i \leq L,  ~ A'[i] \neq ~'-'$ or $B'[i] \neq ~'-'$.
Each position $i, ~1 \leq i \leq L$, in the alignment is called a
column of the alignment.
\end{definition}

Given an alignment $(A',B')$ of length $L$ between two CDS $A$ and $B$,
let $S$ be the sequence $A'$ or $B'$. We denote by
$S[k~..~l], ~1 \leq k \leq l \leq L$, the substring of $S$ going from
position $k$ to position $l$. $|S[k~..~l]|$ denotes the number of letters
in $S[k~..~l]$ that are different from the gap symbol $'-'$.
For example, if $A'=\texttt{ACCAT--GTAG}$ and $B'=\texttt{AC--TACGTAG}$, $|A'[4~..~8]| = |\texttt{AT--G}| = 3$.
A codon of $A$ or $B$ is \emph{grouped in the alignment} $(A',B')$ if its
three nucleotides appear in three consecutive columns of the alignment.
For example, the first codon \texttt{ACC} of $A$  is grouped, while the first codon \texttt{ACT} of $B$ is not grouped.

In the following, we give our definition of the score of an alignment $(A',B')$
between two CDS $A$ and $B$. It is based on a partition of the codons
of $A$ (resp. $B$) into four sets depending on the alignment of codons
(see Figure \ref{fig:CodonSets} for an illustration):

\begin{enumerate}
\item  The set of \texttt{In-frame Matching codons (IM)} contains the codons
that are grouped in the alignment and aligned with a
codon of the other CDS. 
\item The set of \texttt{Frameshift extension codons (FSext)} contains the
  codons
that are grouped in the alignment and aligned with a concatenation of three 
nucleotides that overlaps two codons of the other CDS.
\item The set of \texttt{Deleted/Inserted codons (InDel)} contains the codons
that are grouped in the alignment and aligned with a concatenation of $3$ gap
symbols.
\item All other codons constitutes
  \texttt{Frameshift initiation codons (FSinit)}. The set of
  \texttt{Matching nucleotides} \texttt{in FSinit codons (MFS)} contains
all the nucleotides belonging to \texttt{FSinit} codons and aligned with a
nucleotide of the other CDS. 
\end{enumerate}

\begin{figure}[ht!]
	\centering
	\includegraphics[width=0.8\textwidth]{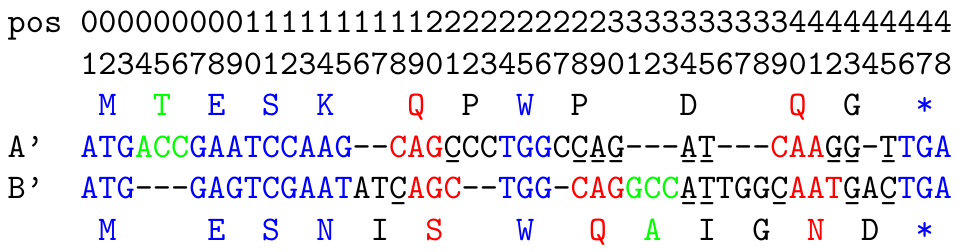}
	\caption{An alignment $(A',B')$ of length $48$ between two CDS, $A$ ($13$ codons)
		and $B$ ($14$ codons). The number arrays indicate the positions of the
		consecutive alignment columns. Codons of
		$A$ and $B$ are colored according to the set to which they belong:
		\texttt{IM} codons in blue color, \texttt{FSext} codons in red color,
		\texttt{InDel} codons in green color and \texttt{FSinit} codons in
		black color. \texttt{MFS} nucleotides contained in \texttt{FSinit} codons
		are underlined.
	}
	\label{fig:CodonSets}
\end{figure}
The following notations and conventions are used in Definition \ref{def:score}
to denote the different sets of codons and nucleotides in $A$ and $B$. 
The set of \texttt{IM} codons in $A$ (resp. $B$) is denoted by
$\texttt{IM}_{A\rightarrow B}$ (resp. $\texttt{IM}_{B\rightarrow A}$). 
The set of \texttt{FSext} codons in $A$ (resp. $B$) is denoted by
$\texttt{FSext}_{A\rightarrow B}$ (resp. $\texttt{FSext}_{B\rightarrow A}$).
The set of \texttt{InDel} codons in $A$ (resp. $B$) is denoted by
$\texttt{InDel}_{A\rightarrow B}$ (resp. $\texttt{InDel}_{B\rightarrow A}$).
The set of \texttt{MFS} nucleotides in $A$ (resp. $B$) is denoted by
$\texttt{MFS}_{A\rightarrow B}$ (resp. $\texttt{MFS}_{B\rightarrow A}$).
In these sets, the codons of $A$ and $B$ are simply identified by the position
(column) of their last nucleotide in the alignment. In this case,
we always have $\texttt{IM}_{A\rightarrow B} =  \texttt{IM}_{B\rightarrow A}$ as
in the example below. The \texttt{MFS} nucleotides are also identified by
their positions in the alignment.

For example, for the alignment depicted in Figure \ref{fig:CodonSets},
the composition of the different sets are:
 $ \texttt{IM}_{A\rightarrow B}  =  \texttt{IM}_{B\rightarrow A} = \{ 3, 9, 12, 15, 26, 48\} $;
 $ \texttt{FSext}_{A\rightarrow B}   =  \{20, 41\} $;
 $ \texttt{InDel}_{A\rightarrow B}  =  \{ 6 \} $; 
$ \texttt{MFS}_{A\rightarrow B}  =  \{21,28,29,30,34,35,42,43,45\}$;
 $ \texttt{FSext}_{B\rightarrow A} =  \{ 21, 30, 42\}$;
 $ \texttt{InDel}_{B\rightarrow A}  =  \{ 33\}$; and
 $ \texttt{MFS}_{B\rightarrow A}  =  \{ 18, 34, 35, 39, 43, 45\}.$
 
In the alignment scoring model described in Definition \ref{def:score},
the substitutions of \texttt{IM} and \texttt{FSext} codons  are scored
using an AA scoring function $s_{aa}$ such that aligned codons with
silent nucleotide mutations get the same score as identity. A fixed FS
extension cost denoted by  \texttt{fs\_extend\_cost} is added for
each \texttt{FSext} codon. The insertions/deletions of \texttt{InDel}
codons are scored
by adding a fixed gap cost denoted by \texttt{gap\_cost} for each
\texttt{InDel} codon.
The alignment of \texttt{MFS}  nucleotides are scored independently
from each other, using a nucleotide scoring function $s_{an}$. The insertions
or deletions of nucleotides in \texttt{FSinit} codons are responsible for
the initiation of FS translations. They are then scored by adding a fixed
FS opening cost denoted by \texttt{fs\_open\_cost} for each \texttt{FSinit} codon.
Note that, by convention, the values of all penalty costs for gap and FS (\texttt{gap\_cost}, \texttt{fs\_open\_cost}, \texttt{fs\_extend\_cost}) are negative. Note also that the scoring scheme assumes that the AA and the nucleotide scoring functions, $s_{aa}$ and  $s_{an}$, are symmetric.

\begin{definition}[Score of an alignment] \hfill\\
 \label{def:score}
 Let $(A',B')$ be an alignment of length $L$ between two CDS $A$ and $B$.
  The score of the alignment $(A',B')$ is defined by:

    \[ \begin{array}{lll}
      \texttt{score}(A',B') & =& \sum_{k \in \texttt{IM}_{A\rightarrow B}}{s_{aa}(A'[k-2~..~k],B'[k-2~..~k])} ~ +\\

      & & \sum_{k \in \texttt{FSext}_{A\rightarrow B}}{( \frac{s_{aa}(A'[k-2~..~k],B'[k-2~..~k])}{2} + \texttt{fs\_extend\_cost} )} ~ +\\

      & &  |\texttt{InDel}_{A\rightarrow B}| * \texttt{gap\_cost} ~ +\\

      & & (\frac{|A|}{3} -  |\texttt{IM}_{A\rightarrow B}| - |\texttt{FSext}_{A\rightarrow B}| - |\texttt{InDel}_{A\rightarrow B}|) * \texttt{fs\_open\_cost} ~ + \\

      & & \sum_{k \in \texttt{MFS}_{A\rightarrow B}}{\frac{s_{an}(A'[k],B'[k])}{2}}  ~ + \\
   & & \sum_{k \in \texttt{FSext}_{B\rightarrow A}}{ (\frac{s_{aa}(B'[k-2~..~k],A'[k-2~..~k])}{2} + \texttt{fs\_extend\_cost})} ~+\\
  & &  |\texttt{InDel}_{B\rightarrow A}| * \texttt{gap\_cost} ~+\\
  & & (\frac{|B|}{3} -  |\texttt{IM}_{B\rightarrow A}| - |\texttt{FSext}_{B\rightarrow A}| - |\texttt{InDel}_{B\rightarrow A}|) * \texttt{fs\_open\_cost} ~+ \\  
  & & \sum_{k \in \texttt{MFS}_{B\rightarrow A}}{\frac{s_{an}(B'[k],A'[k])}{2}}\\

    \end{array}
\]
\end{definition}

\section*{Method }
\label{algorithm}

In this section, we describe a $O(nm)$ time and space complexity algorithm
that solves the problem of finding a maximum score alignment between two
CDS $A$ and $B$ of lengths $n$ and $m$.
Similarly to other classical sequence alignment algorithms \cite{needle}, we use
dynamic programming tables that are indexed by the
pairs of prefixes of the two CDS. The table $D$ stores the
maximum scores of the alignments between prefixes of $A$ and $B$. The table
$D_F$ is used to account for potential cases of FS extensions that
are counted subsequently.

\begin{definition}[Dynamic programming tables] \hfill\\
  \label{table}
  Given two CDS $A$ and $B$ as input, the algorithm uses
  two dynamic programming tables $D$ and $D_F$ of size $(n+1)\times (m+1)$.
  The cell $D(i,j)$ contains the maximum score of an alignment between
  the prefixes $A[1~..~i]$ and $B[1~..~j]$.
  The table $D_F$ is filled only for values of $i$ and $j$ such that
  $i (mod~3) = 0$ or $j (mod~3) = 0$.
  If $i (mod~3) \neq 0$
  (resp. $j (mod~3) \neq 0$), the cell $D_F(i,j)$ contains the score of an
  alignment between the prefixes $A[1~..~i+\alpha]$ and $B[1~..~j+\alpha]$ where
  $\alpha = (3-i) (mod~3)$ (resp. $\alpha = (3-j) (mod~3)$).
  The table $D_F$ is filled as follows:
  \begin{itemize}
    \item If $i (mod~3) = 0$ and $j (mod~3) = 0$, $D_F(i,j) = D(i,j)$.
    \item If $i (mod~3) = 0$ and $j (mod~3) = 2$, or $i (mod~3) = 2$
      and $j (mod~3) = 0$, $D_F(i,j)$ contains the maximum score of
      an alignment between $A[1~..~i+1]$ and $B[1~..~j+1]$ such that
      $A[i+1]$ and $B[j+1]$ are aligned together and half of
      the score for aligning $A[i+1]$ with $B[j+1]$ is subtracted.
      
    \item If $i (mod~3) = 0$ and $j (mod~3) = 1$, or $i (mod~3) = 1$ and
      $j (mod~3) = 0$, $D_F(i,j)$ contains the maximum score of an alignment
      between $A[1~..~i+2]$ and $B[1~..~j+2]$ such that $A[i+1]$,$B[j+1]$ and
        $A[i+2]$,$B[j+2]$ are aligned together and half
        of the scores of aligning $A[i+2]$ with $B[j+2]$ and $A[i+1]$ with
        $B[j+1]$ is subtracted. 
 \end{itemize}
\end{definition}
\begin{lemma}[Filling up table D] \hfill\\
  \label{D}
  
  \begin{enumerate}
  \item {\bf If $i (mod~3) = 0$ and $j (mod~3) = 0$}
    \scriptsize
      \[ D(i,j) = \max \left\{
  \begin{array}{ll}
    1. & s_{aa}(A[i-2~..~i],B[j-2~..~j]) + D(i-3,j-3)\\
    
    2. & s_{an}(A[i],B[j]) + s_{an}(A[i-1],B[j-1]) + D(i-3,j-2) + 2 * \texttt{fs\_open\_cost}\\
    
    3. & s_{an}(A[i],B[j]) + s_{an}(A[i-2],B[j-1]) + D(i-3,j-2) + 2 * \texttt{fs\_open\_cost}\\
    
    4. & s_{an}(A[i],B[j]) + D(i-3,j-1) + 2* \texttt{fs\_open\_cost}\\
    
    5. & s_{an}(A[i],B[j]) + s_{an}(A[i-1],B[j-1]) + D(i-2,j-3) + 2 * \texttt{fs\_open\_cost}\\
    
    6. & s_{an}(A[i],B[j]) + s_{an}(A[i-1],B[j-2]) + D(i-2,j-3) + 2 * \texttt{fs\_open\_cost}\\

    7. & s_{an}(A[i],B[j]) + D(i-1,j-3) + 2* \texttt{fs\_open\_cost}\\

    8. & s_{an}(A[i],B[j]) + D(i-1,j-1) + 2 * \texttt{fs\_open\_cost}\\

    9. & \frac{s_{an}(A[i-1],B[j])}{2} + \frac{s_{an}(A[i-2],B[j-1])}{2} + D_F(i-3,j-2) + \texttt{fs\_open\_cost}\\
    
    10. & s_{an}(A[i-1],B[j]) + D(i-3,j-1) + 2 * \texttt{fs\_open\_cost}\\
    
    11. & \frac{s_{an}(A[i-2],B[j])}{2} + D_F(i-3,j-1) + \texttt{fs\_open\_cost}\\
    
    12. & \texttt{gap\_cost} + D(i-3,j) \\
    
    13. & D(i-1,j) + \texttt{fs\_open\_cost}\\
    
    14. & \frac{s_{an}(A[i],B[j-1])}{2} + \frac{s_{an}(A[i-1],B[j-2])}{2} + D_F(i-2,j-3) + \texttt{fs\_open\_cost}\\
    
    15. & s_{an}(A[i],B[j-1]) + D(i-1,j-3) + 2 * \texttt{fs\_open\_cost}\\
    
    16. & \frac{s_{an}(A[i],B[j-2])}{2} + D_F(i-1,j-3) + \texttt{fs\_open\_cost}\\
    
    17. & \texttt{gap\_cost} + D(i,j-3) \\
    
    18. & D(i,j-1) + \texttt{fs\_open\_cost}\\  
  \end{array}
\right.
\]

    \normalsize
    \item {\bf If $i (mod~3) = 0$ and $j (mod~3) \neq 0$}

      \scriptsize
      \[ D(i,j) = \max \left\{
  \begin{array}{ll}
  1. &   \frac{s_{aa}(A[i-2~..~i],B[j-2~..~j])}{2} + D_F(i-3,j-3) + \texttt{fs\_extend\_cost}\\
   & + \frac{s_{an}(A[i],B[j])}{2} (+ \frac{s_{an}(A[i-1],B[j-1])}{2} ~if ~j-1 (mod~3) \neq 0)\\
    
  2. &   s_{an}(A[i],B[j]) + s_{an}(A[i-1],B[j-1]) + D(i-3,j-2) + \texttt{fs\_open\_cost} \\
   & (+ \texttt{fs\_open\_cost} ~if ~j-1 (mod~3) = 0)\\

  3. &  s_{an}(A[i],B[j]) + s_{an}(A[i-2],B[j-1]) + D_F(i-3,j-2) + \texttt{fs\_open\_cost}\\
  & (- \frac{s_{an}(A[i-2],B[j-1])}{2} ~if ~j-1 (mod~3) = 0)\\
    
  4. &  s_{an}(A[i],B[j]) + D(i-3,j-1) + \texttt{fs\_open\_cost}\\
      
  5. &  s_{an}(A[i],B[j]) + D(i-1,j-1) + \texttt{fs\_open\_cost}\\

  6. &  s_{an}(A[i-1],B[j]) + s_{an}(A[i-2],B[j-1]) + D_F(i-3,j-2) + \texttt{fs\_open\_cost}\\
   & (- \frac{s_{an}(A[i-2],B[j-1])}{2} ~if ~j-1 (mod~3) = 0)\\
      
  7. & s_{an}(A[i-1],B[j]) + D(i-3,j-1) + \texttt{fs\_open\_cost}\\
    
  8. &  s_{an}(A[i-2],B[j]) + D(i-3,j-1) + \texttt{fs\_open\_cost}\\

  9. &   \texttt{gap\_cost} + D(i-3,j) \\

  10. & D(i-1,j) + \texttt{fs\_open\_cost}\\

  11. &  D(i,j-1)\\
    
  \end{array}
\right.
\]

    \normalsize
      \item {\bf If $i (mod~3) \neq 0$ and $j (mod~3) = 0$}, the equation is symmetric to the previous case.

      \item {\bf If $i (mod~3) \neq 0$ and $j (mod~3) \neq 0$}

      \scriptsize
      $ D(i,j) = \max \left\{
  \begin{array}{ll}
    1. & s_{an}(A[i],B[j]) + D(i-1,j-1)\\

    2. & D(i-1,j)\\
      
    3. & D(i,j-1)\\    
  \end{array}
\right.
$
      \normalsize

  \end{enumerate}
\end{lemma}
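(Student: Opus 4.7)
The plan is to prove Lemma~1 by induction on $i+j$, showing that $D(i,j)$ equals the maximum over all alignments of $A[1..i]$ with $B[1..j]$ of the score defined in Definition~\ref{def:score}. Base cases ($i=0$ or $j=0$) are handled directly by counting the gap or frameshift-opening contributions induced by a full gap alignment. For the inductive step, the argument in each of the four cases has the usual two directions: every term inside the $\max$ corresponds to a legal extension of a shorter alignment, so $D(i,j)$ is at least the maximum; conversely, every optimal alignment of $A[1..i]$ with $B[1..j]$ ends in a configuration falling into exactly one of the enumerated extension types, giving the reverse inequality. The $D_F$-values invoked are treated as oracles whose semantics is pinned down by Definition~\ref{table}, and their computation is a separate matter.

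Case~4, where neither index is a codon boundary, is the easiest: both positions lie strictly inside a frameshift region, so the last column of an optimal alignment is a diagonal nucleotide pair, a single gap in $A$, or a single gap in $B$, producing the three terms. Cases~2 and~3 are symmetric and handle the situation in which exactly one prefix has just completed a codon. The recurrence classifies the last codon of $A$ (say) according to the partition of Definition~\ref{def:score} into \texttt{IM}, \texttt{FSext}, \texttt{InDel}, or \texttt{FSinit}, and within \texttt{FSinit} it further splits on how each of its three nucleotides is aligned with nucleotides or gaps of $B$. The \texttt{IM} and \texttt{FSext} subcases are precisely those invoking $D_F$, because closing an in-frame or extension codon at the $A$-boundary requires looking ahead by one or two nucleotides into the still-open codon of $B$.

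Case~1 is the principal obstacle: both prefixes end at codon boundaries, and the eighteen terms enumerate all admissible ways the alignment can end. My strategy is to partition by the type of the final codon of $A$ and dually for $B$, then, within each combination, enumerate how individual nucleotides of any \texttt{FSinit} codons are matched or gapped. Terms~1, 12, and~17 handle the clean \texttt{IM} and \texttt{InDel} endings; the terms invoking $D_F$ (numbers 9, 11, 14, 16) handle endings where one side's last codon is a \texttt{FSext} aligned across the boundary, with the halved $s_{aa}$ score prepaid inside $D_F$ by Definition~\ref{table}; the remaining terms enumerate the nucleotide-level configurations of aligning \texttt{FSinit} codons and account for two independent \texttt{fs\_open\_cost} contributions, since both sequences then host a disrupted codon. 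Completeness is established by observing that the modular residues of the indices force each codon boundary to fall at a unique offset, so the enumeration above truly exhausts every possible last-event pattern.

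The most delicate verification is the bookkeeping between $D$ and $D_F$. By Definition~\ref{table}, a $D_F$-value prepays half of each $s_{an}$ or $s_{aa}$ score on the partial codon that will eventually be completed as a \texttt{FSext}. One must check that each recurrence term which consumes a $D_F$-value supplies exactly the complementary halves together with one $\texttt{fs\_extend\_cost}$, and charges no additional \texttt{fs\_open\_cost} for that codon, matching Definition~\ref{def:score} term by term. I expect this to go through by expanding both sides of the claimed equality as sums of the atomic contributions in Definition~\ref{def:score}; once checked in one representative subcase (say term~9 of Case~1), the others follow by $A \leftrightarrow B$ symmetry and by the modular constraints that determine exactly which nucleotide positions carry which contribution.
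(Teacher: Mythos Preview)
Your overall architecture—induction on $i+j$, the two-direction argument, and the case split by the type of the last codon of $A$ (and symmetrically $B$) into \texttt{IM}/\texttt{FSext}/\texttt{InDel}/\texttt{FSinit}—is exactly the approach the paper takes (the full proof is deferred to an additional file, but Figure~\ref{fig:D1} makes the intended enumeration of terminal configurations explicit, and it matches your plan).

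Two points in your description of the $D_F$ mechanism are inaccurate and would trip you up once you start the bookkeeping check. First, by Definition~\ref{table}, $D_F$ does \emph{not} prepay any $s_{aa}$ contribution; it stores the optimal score of a slightly longer alignment with one or two forced diagonal columns, \emph{minus} half of the corresponding $s_{an}$ values. The $\tfrac{s_{aa}}{2}+\texttt{fs\_extend\_cost}$ term only appears later, inside the recurrences of Lemma~\ref{DF}, precisely when the still-open codon is resolved as \texttt{FSext}. So in Case~2 term~1 (and the analogous Case~1 terms 9, 11, 14, 16) the added $\tfrac{s_{an}}{2}$ pieces are not ``complementary halves'' of something $D_F$ prepaid; they are the \texttt{MFS} contributions from the side whose codon has just been fixed as \texttt{FSinit}, restoring what $D_F$ subtracted so that the other side's codon can still be classified either way downstream. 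Second, in Case~2 the \texttt{IM} possibility for $A[i{-}2..i]$ is \emph{not} among the $D_F$-invoking terms: since $j\not\equiv 0\pmod 3$, an in-frame match of $A$'s last codon forces trailing gap columns on the $A$ side, and that configuration is reached recursively through term~11, $D(i,j{-}1)$. The terms that call $D_F$ in Case~2 (items 1, 3, 6) cover the \texttt{FSext} case for $A$'s last codon (item~1) and certain \texttt{FSinit} configurations of $A$'s last codon in which the adjacent codon of $B$ may or may not be \texttt{FSext} (items~3,~6). Once you correct these two points, your expansion-and-match verification goes through as you outline.
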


The proof of Lemma \ref{D} is given in the Additional file 1.
Figure \ref{fig:D1} illustrates the configurations
  of alignment considered in Lemma \ref{D}
  for computing $D(i,j)$ for Cases 1 and 2.

\begin{figure}[ht!]
\centering
\includegraphics[width=0.85\textwidth]{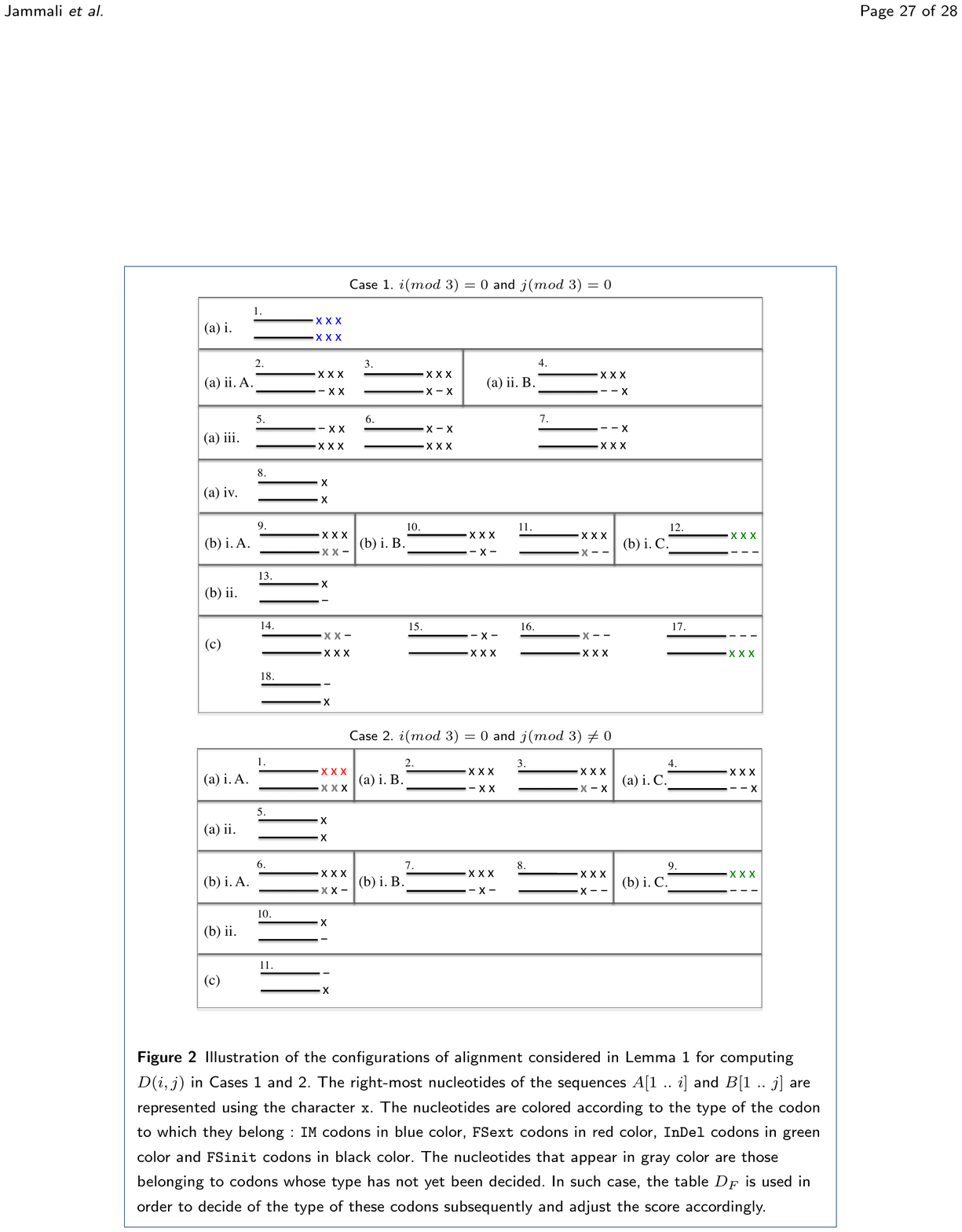}  
\caption{Illustration of the configurations
  of alignment considered in Lemma \ref{D}
  for computing $D(i,j)$ in Cases 1 and 2.
  The right-most nucleotides of the sequences $A[1~..~i]$
  and $B[1~..~j]$ are represented using the character \texttt{x}.
  The nucleotides are colored according to the type of the codon
  to which they belong : \texttt{IM} codons in blue color, \texttt{FSext}
  codons in red color, \texttt{InDel} codons in green color and \texttt{FSinit}
  codons in black color.
  The nucleotides that appear in gray color are those belonging to codons
  whose type has not yet been decided. In such case, the table $D_F$ is used
  in order to decide of the type of these codons subsequently and adjust
  the score accordingly.
}
\label{fig:D1}
\end{figure}
\begin{lemma}[Filling up table $D_F$] \hfill\\
    \label{DF}
  \begin{enumerate}
  \item {\bf If $i (mod~3) = 0$ and $j (mod~3) = 0$}\\
    \scriptsize
    $ D_F(i,j) = D(i,j)$\\
    
    \normalsize
  \item {\bf If $i (mod~3) = 2$ and $j (mod~3) = 0$}\\
    \scriptsize    
      \[ D_F(i,j) = \max \left\{
  \begin{array}{ll}
    1. & \frac{s_{aa}(A[i-1~..~i+1],B[j-1~..~j+1])}{2} + D_F(i-2,j-2) + \texttt{fs\_extend\_cost}\\
    
    2. & \frac{s_{an}(A[i+1],B[j+1])}{2} + s_{an}(A[i],B[j]) + D(i-2,j-1) + 2 * \texttt{fs\_open\_cost}\\
    
    3. & \frac{s_{an}(A[i+1],B[j+1])}{2} + \frac{s_{an}(A[i-1],B[j])}{2} + D_F(i-2,j-1) + \texttt{fs\_open\_cost}\\

    4. & \frac{s_{an}(A[i+1],B[j+1])}{2}  + D(i-2,j) + \texttt{fs\_open\_cost}\\

    5. & \frac{s_{an}(A[i+1],B[j+1])}{2}  + D(i,j) + \texttt{fs\_open\_cost}\\
  \end{array}
\right.
\]
    \normalsize
  \item {\bf If $i (mod~3) = 0$ and $j (mod~3) = 2$}, the equation is symmetric to the previous case.\\

\item {\bf If $i (mod~3) = 1$ and $j (mod~3) = 0$}\\
    \scriptsize    
      \[ D_F(i,j) = \max \left\{
  \begin{array}{ll}
    1. & \frac{s_{aa}(A[i~..~i+2],B[j~..~j+2])}{2} + D_F(i-1,j-1) + \texttt{fs\_extend\_cost}\\
    
    2. & \frac{s_{an}(A[i+2],B[j+2])}{2} + \frac{s_{an}(A[i+1],B[j+1])}{2}  + D(i-1,j) + \texttt{fs\_open\_cost}\\
    
    3. & \frac{s_{an}(A[i+2],B[j+2])}{2} + \frac{s_{an}(A[i+1],B[j+1])}{2}  + D(i,j) + \texttt{fs\_open\_cost}\\
  \end{array}
\right.
\]
    \normalsize

  \item {\bf If $i (mod~3) = 0$ and $j (mod~3) = 1$}, the equation is symmetric to the previous case.
  \end{enumerate}
\end{lemma}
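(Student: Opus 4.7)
The plan is a case analysis on $(i \bmod 3,\ j \bmod 3)$ mirroring the five cases of the lemma statement. Case~1 is immediate from the defining convention $D_F(i,j)=D(i,j)$ when both indices are multiples of~$3$. Cases~3 and~5 reduce to Cases~2 and~4 by exchanging the roles of $A$ and $B$, which leaves Definition~\ref{def:score} invariant because $s_{aa}$ and $s_{an}$ are symmetric. The substantive work is therefore in Cases~2 and~4.

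For Case~2 ($i \equiv 2,\ j \equiv 0 \pmod{3}$), the plan is to fix an alignment $(A',B')$ between $A[1..i+1]$ and $B[1..j+1]$ whose Definition~\ref{def:score} score minus $\tfrac{s_{an}(A[i+1],B[j+1])}{2}$ attains $D_F(i,j)$, with $A[i+1]$ and $B[j+1]$ paired in the final column. The codon $A[i-1..i+1]$ is complete in $(A',B')$, and its structure partitions the possibilities into five subcases that correspond, in order, to the five lines of the recurrence. Subcase~1 is when $A[i-1..i+1]$ is a \texttt{FSext} codon aligned with $B[j-1..j+1]$: the three trailing columns form a diagonal, the remainder of $(A',B')$ is exactly what $D_F(i-2,j-2)$ records (its half-score subtractions anticipate the FSext classification of both this codon of $A$ and the adjacent codon $B[j-2..j]$ of $B$), and the explicit added terms $\tfrac{s_{aa}}{2}+\texttt{fs\_extend\_cost}$ account for the FSext contribution from $A$'s side. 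Subcases~2--4 take $A[i-1..i+1]$ to be a grouped \texttt{FSinit} codon and differ only in the gap pattern of its two leading nucleotides $(A[i-1], A[i])$: patterns $(\text{gap}, B[j])$, $(B[j], \text{gap})$, and $(\text{gap}, \text{gap})$ reduce to $D(i-2,j-1)$, $D_F(i-2,j-1)$, and $D(i-2,j)$ respectively. Subcase~5 is when $A[i-1..i+1]$ is \texttt{FSinit} but not grouped in $(A',B')$; equivalently, $A[i-1]$ and $A[i]$ already appear inside the subalignment captured by $D(i,j)$ and the extension consists only of the final diagonal column.

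Case~4 ($i \equiv 1,\ j \equiv 0 \pmod{3}$) is handled analogously for the codon $A[i..i+2]$, whose three nucleotides are all placed in the extension. The two forced diagonal pairings $A[i+1]$--$B[j+1]$ and $A[i+2]$--$B[j+2]$ leave only three subcases: \texttt{FSext} aligned with $B[j..j+2]$, recursing via $D_F(i-1,j-1)$; grouped \texttt{FSinit} with $A[i]$ paired to a gap, via $D(i-1,j)$; and \texttt{FSinit} with $A[i]$ already inside the subalignment of $D(i,j)$.

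The main obstacle is the score bookkeeping. For each subcase I would verify (i)~that the $\tfrac{1}{2}s_{an}$ subtractions baked into any $D_F$ entries appearing on the right-hand side are exactly compensated by matching additive half-terms in the recurrence line, and (ii)~that the multiplicity of \texttt{fs\_open\_cost} is correct, namely one per \texttt{FSinit} codon newly completed by the extension, which becomes two precisely when the chosen pattern forces both the codon of $A$ just completed and the adjacent codon $B[j-2..j]$ of $B$ to be \texttt{FSinit}. These compensations are routine arithmetic under Definition~\ref{def:score} once one adopts the implicit convention that, inside any $D$ or $D_F$ cell, nucleotides of an incomplete trailing codon paired with the opposite sequence contribute tentative MFS half-scores while the codon's \texttt{fs\_open\_cost} is deferred until completion; but they must be checked uniformly across all subcases, and this uniform verification is the bulk of the work.
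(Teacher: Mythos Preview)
Your approach is correct, but it differs from the paper's in a way worth noting. The paper does not re-derive the five (resp.\ three) subcases from the codon classification; instead it reduces Lemma~\ref{DF} directly to Lemma~\ref{D}. Concretely, for Case~2 it observes that $(i+1)\bmod 3 = 0$ and $(j+1)\bmod 3 = 1$, so the recurrence for $D(i+1,j+1)$ is exactly Case~2 of Lemma~\ref{D} with its eleven lines; $D_F(i,j)$ is then obtained by discarding the lines in which $A[i+1]$ is not paired with $B[j+1]$ (leaving lines 1--5) and subtracting $\tfrac{1}{2}s_{an}(A[i+1],B[j+1])$ uniformly. Case~4 is handled the same way via $D(i+2,j+2)$, which again falls under Case~2 of Lemma~\ref{D}, and the surviving lines are 1, 2, 5. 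The paper's route is a two-line filtering argument once Lemma~\ref{D} is in hand; your route re-performs the structural case analysis that already underlies Lemma~\ref{D}'s proof, so the bookkeeping you flag as ``the bulk of the work'' is effectively duplicated effort. Both arguments are valid; the paper's buys brevity by treating Lemma~\ref{D} as a black box, while yours is self-contained and makes the semantic meaning of each recurrence line explicit.

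One small inaccuracy in your write-up: the tentative contribution you describe for nucleotides of an incomplete trailing codon inside $D(i,j)$ is a full $s_{an}$ term (as in line~1 of Case~4 of Lemma~\ref{D}), not a half-score; the halving only enters through $D_F$. This does not affect the correctness of your plan, but it would matter when you carry out the compensations.
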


\begin{proof}[Proof of Lemma \ref{DF}.] The proof follows from Lemma \ref{D}.
  \begin{enumerate}
  \item {\bf If $i (mod~3) = 0$ and $j (mod~3) = 0$}, this case is trivial.
  \item {\bf If $i (mod~3) = 2$ and $j (mod~3) = 0$}, then $i+1 (mod~3) = 0$
    and $j+1 (mod~3) = 1 \neq 0$. The five cases follow from the
    application of Lemma \ref{D}, Case 2 for computing $D(i+1,j+1)$,
    and by keeping
    only the cases where $A[i+1]$ and $B[j+1]$ are aligned together (cases 1,
    2, 3, 4, 5 among the 11 cases). However, in each of the cases,
    we must subtract half of the score of aligning $B[i+1]$ with $A[j+1]$
    ($\frac{s_{an}(A[i+1],B[j+1])}{2}$), because this score will be added
    subsequently.
    
  \item {\bf If $i (mod~3) = 0$ and $j (mod~3) = 2$}, the proof is symmetric to the previous case.

  \item {\bf If $i (mod~3) = 1$ and $j (mod~3) = 0$}, then $i+2 (mod~3) = 0$
    and $j+2 (mod~3) = 2 \neq 0$. Here again, the three cases follow
    from the application of Lemma \ref{D}, Case 2 for computing
    $D(i+2,j+2)$ and
    by keeping only the cases where $A[i+1]$, $B[i+1]$ and $A[i+2]$,
    $B[i+2]$ can be aligned together (cases 1, 2, 5 among the 11 cases).
    However, in each of the cases, we must subtract half of the
    scores of aligning $B[i+2]$ with $A[j+2]$ and aligning $B[i+1]$ with
    $A[j+1]$ ($\frac{s_{an}(A[i+2],B[j+2])}{2}$,
    $\frac{s_{an}(A[i+1],B[j+1])}{2}$), because theses scores will be added
    subsequently.
    
  \item {\bf If $i (mod~3) = 0$ and $j (mod~3) = 1$}, the proof is symmetric to the previous case.
  \end{enumerate}
\end{proof}

The alignment algorithm using Lemma \ref{D}  and  \ref{DF}
is described in the next  theorem. 

\begin{theorem}[Computing a maximum score alignment] \hfill\\
  \label{thm}
  Given two CDS $A$ and $B$ of lengths $n$ and $m$,
  a maximum score alignment between $A$ and $B$ can be computed in
  time and space  $O(nm)$, using the following algorithm.
  \scriptsize    
  \noindent
  $\texttt{Algorithm Align(A,B)}$\\
  $\texttt{~~for i = 0 to n do}$\\
      $\texttt{~~~~~~}D(i,0) = floor(\frac{i}{3}) * \texttt{gap\_cost}$\\
      $\texttt{~~~~~~}D_F(i,0) = D(i,0) + \left\{
   \begin{array}{ll}
    \frac{s_{an}(A[i+1],B[1])}{2} + \frac{s_{an}(A[i+2],B[2])}{2} + \texttt{fs\_open\_cost}, &  \texttt{if ~i (mod~3) = 1}\\
    \frac{s_{an}(A[i+1],B[1])}{2}  + \texttt{fs\_open\_cost}, &  \texttt{if ~i (mod~3) = 2}
  \end{array}
   \right.$
   
  \noindent   
  $\texttt{~~for j = 0 to m do}$\\
      $\texttt{~~~~~~}D(0,j) = floor(\frac{j}{3}) * \texttt{gap\_cost}$\\
      $\texttt{~~~~~~}D_F(0,j) = D(0,j) + \left\{
   \begin{array}{ll}
    \frac{s_{an}(A[1],B[j+1])}{2} + \frac{s_{an}(A[2],B[j+2])}{2} + \texttt{fs\_open\_cost}, &  \texttt{if ~j (mod~3) = 1}\\
    \frac{s_{an}(A[1],B[j+1])}{2}  + \texttt{fs\_open\_cost}, &  \texttt{if ~j (mod~3) = 2}\\
  \end{array}
\right.$
  $\texttt{~~for i = 0 to n do}$\\
      $\texttt{~~~~~~for j = 0 to m do}$\\
 $\texttt{~~~~~~~~~~compute D(i,j) using Lemma \ref{D}}$\\
 $\texttt{~~~~~~~~~~compute $D_F$(i,j) using Lemma \ref{DF}, ~if ~i (mod~3) = 0 ~or ~j (mod~3) = 0}$
 
\end{theorem}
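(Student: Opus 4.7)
The plan is to derive the theorem as a direct corollary of Lemmas \ref{D} and \ref{DF}, once the boundary conditions have been verified and the work per cell has been counted. I would proceed by strong induction on $i+j$ with the joint invariant that $D(i,j)$ equals the maximum of $\texttt{score}(A',B')$ over alignments $(A',B')$ of $A[1\,..\,i]$ and $B[1\,..\,j]$, and that $D_F(i,j)$ equals the quantity prescribed in Definition \ref{table} whenever $i \equiv 0 \pmod 3$ or $j \equiv 0 \pmod 3$. Once the invariant is established at $(n,m)$, the value $D(n,m)$ is by definition an optimal score, and a standard traceback over the two tables recovers a witnessing alignment.

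The base cases to discharge are $D(i,0)$, $D(0,j)$, $D_F(i,0)$ and $D_F(0,j)$. Any alignment of $A[1\,..\,i]$ against the empty sequence partitions $A[1\,..\,i]$ into $\lfloor i/3 \rfloor$ grouped \texttt{InDel} codons plus, if $i \not\equiv 0 \pmod 3$, a single truncated \texttt{FSinit} codon carrying no \texttt{MFS} nucleotides, so by Definition \ref{def:score} its score is $\lfloor i/3 \rfloor \cdot \texttt{gap\_cost}$, which matches the initialization. For $D_F(i,0)$ with $i \not\equiv 0 \pmod 3$, I would check that the half-nucleotide contributions together with the single $\texttt{fs\_open\_cost}$ added in the initialization encode precisely the ``pre-paid'' part of the score that Definition \ref{table} attaches to an alignment of $A[1\,..\,i+\alpha]$ with $B[1\,..\,\alpha]$ in which $A[i+1],\ldots,A[i+\alpha]$ are committed to aligning with $B[1],\ldots,B[\alpha]$. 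The case $D_F(0,j)$ is symmetric. The inductive step for interior cells is immediate from Lemmas \ref{D} and \ref{DF}: every right-hand-side in both lemmas refers only to cells $(i',j')$ with $i'+j' < i+j$, so the strong induction hypothesis applies to each term inside each maximum.

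For the complexity, both tables hold $(n+1)(m+1)$ cells, giving $O(nm)$ space. Each cell of $D$ is a maximum of at most $18$ quantities and each cell of $D_F$ a maximum of at most $5$ quantities; every quantity is a constant-time arithmetic combination of values of $s_{aa}$, $s_{an}$, and previously computed cells; hence the total running time is $O(nm)$. The step I expect to require the most care is the consistency of the half-score bookkeeping across the $D$/$D_F$ interface: the fractional contributions written into $D_F$ at the moment an \texttt{FSext} region or a straddling pair of \texttt{MFS} nucleotides is anticipated must be exactly compensated when the next multiple-of-three column is later reached in the $D$-recurrence, so that no substitution score, $\texttt{fs\_extend\_cost}$, or $\texttt{fs\_open\_cost}$ is double-counted or omitted. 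Once this accounting is checked against each subcase of Lemmas \ref{D} and \ref{DF}, the theorem follows.
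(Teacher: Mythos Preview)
Your approach is essentially the paper's own: correctness from Lemmas~\ref{D} and~\ref{DF} together with the boundary initializations, and $O(nm)$ complexity from counting a constant amount of work per table cell, plus a traceback for the witnessing alignment. One small point to tighten: your claim that ``every right-hand-side in both lemmas refers only to cells $(i',j')$ with $i'+j'<i+j$'' is not quite true, since option~5 in Case~2 of Lemma~\ref{DF} (and the analogous options in Cases~3--5) uses $D(i,j)$ itself; the paper handles this by noting that $D(i,j)$ is computed before $D_F(i,j)$ at the same index pair, so your induction should be on $i+j$ with the convention that, at level $i+j$, all $D$ values are established before any $D_F$ value.
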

\normalsize

\begin{proof}[Proof of Theorem \ref{thm}] \hfill \\
  The proof relies on two points: (1) The algorithm computes the maximum
  score of an alignment between $A$ and $B$ and (2) the algorithm runs
  with an $O(nm)$ time and space complexity.\\

  \noindent
  (1) The validity of the algorithm, i.e. the fact that it fills the cells of
  the tables $D$ and $D_F$ according to Definition \ref{table}, follows from
  five points.
  \begin{itemize}
  \item The initialization of the tables is a direct consequence of
  Definition \ref{table}.
  \item Lemmas \ref{D} and \ref{DF}.
  \item The couples $(i,j)$ of prefixes of $A$ and $B$ that need to be
    considered
    in the algorithm are all the possible couples for $D(i,j)$ and only
    the couples such
  that $i (mod~3) = 0$ or $j (mod~3) = 0$ for $D_F(i,j)$ (see all the cases
  in which the table $D_F$ is used in Lemmas \ref{D} (7 cases) and \ref{DF}
  (3 cases)).
  \item The couples $(i,j)$ of prefixes of $A$ and $B$ are considered
  in increasing
  order of length and $D[i,j]$ is computed before $D_F[i,j]$ in the cases
  where $i (mod~3) = 0$ or $j (mod~3) = 0$.
  \item A backtracking of the algorithm allows to find a maximum
  score alignment between $A$ and $B$.
  \end{itemize}

  \noindent
  (2) The time and space complexity of the algorithm is a direct consequence
  of the number of cells of the tables $D$ and $D_F$,
  $2 \times (n+1) \times (m+1)$. Each cell is filled in constant time.
The exact formula for the computational complexity of the algorithm is computed below.\\
\begin{tabular}{lllll}
& $18$ & $\times$ & $\frac{nm}{9}$ & ~~~~~for $\frac{nm}{9}$ calls of the Case 1 of Lemma \ref{D}\\
+& $11$ & $\times$ & $2\times \frac{nm}{3}$ & ~~~~~for $2\times \frac{nm}{3}$ calls of the Cases 2 or 3 of Lemma \ref{D}\\
+& $3$ & $\times$ & $\frac{4nm}{9}$ & ~~~~~for $\frac{4nm}{9}$ calls of the Case 4 of Lemma \ref{D}\\
+& $1$ & $\times$ & $\frac{nm}{9}$ & ~~~~~for $\frac{nm}{9}$ calls of the Case 1 of Lemma \ref{DF}\\
+& $5$ & $\times$ & $2 \times \frac{nm}{9}$ & ~~~~~for $2\times \frac{nm}{9}$ calls of the Cases 2 and 3 of Lemma \ref{DF}\\
+& $3$ & $\times$ & $2 \times \frac{nm}{9}$ & ~~~~~for $2\times \frac{nm}{9}$ calls of the Cases 4 and 5 of Lemma \ref{DF}\\
Total & = &\multicolumn{3}{l}{$12.55nm$}  \\
\end{tabular}
\\
  \end{proof}

\section*{Results and discussion}

 We implemented the present CDS alignment algorithm with an affine gap penalty scheme \cite{altschul1986optimal}
 such that the penalty for a concatenation of $k$ inserted (resp. deleted)
 codons is
 $\texttt{gap\_open\_cost} + k * \texttt{gap\_cost}$, such that $\texttt{gap\_open\_cost}$ is a negative penalty cost for gap initiations.
This was done by
 adding two dynamic programming tables $G_A$ and $G_B$ such that 
 the cell $G_A(i,j)$ (resp. $G_B(i,j)$) contains the maximum score of an
 alignment between the prefixes $A[1~..~i]$ and $B[1~..~j]$ where the codon
 $A[i-2~..~i]$ (resp. $B[j-2~..~i]$) is an \texttt{InDel} codon.

\subsection*{\bf Data}
\label{experiment}
 We evaluated the algorithm through applications on a mammalian dataset
 containing CDS sequences from ten gene families obtained from the database
 Ensembl-Compara version 83 \cite{cunningham2015ensembl}. The first gene family named "FAM86" is such that three CDS from three of its paralogous human genes were shown in \cite{okamura2006frequent} to share a common FS region translated in three different frames in the three CDS (see Figure \ref{fig:fam86-1} for an illustration of the multiple alignment of these three CDS). The nine other families are  the nine smallest (in term of the overall length of CDS) of fifteen gene families listed in \cite{raes2005functional} where they were shown to display 
 one FS translation region between some pairs of CDS.
 For each gene family, the CDS  of all \emph{human}, \emph{mouse} and
 \emph{cow} genes belonging to the family and satisfying Definition \ref{def:CDS} were downloaded. The overall number
 of distinct pairs of CDS within the ten gene families is $4011$.
 Table \ref{tab:tenfamiliesdetails} gives the details about the content and size of the ten gene
 families (The CDS of the ten gene families are provided in the Additional file 2). 

\begin{figure}[ht!]
	\centering		
	\includegraphics[width=0.85\textwidth]{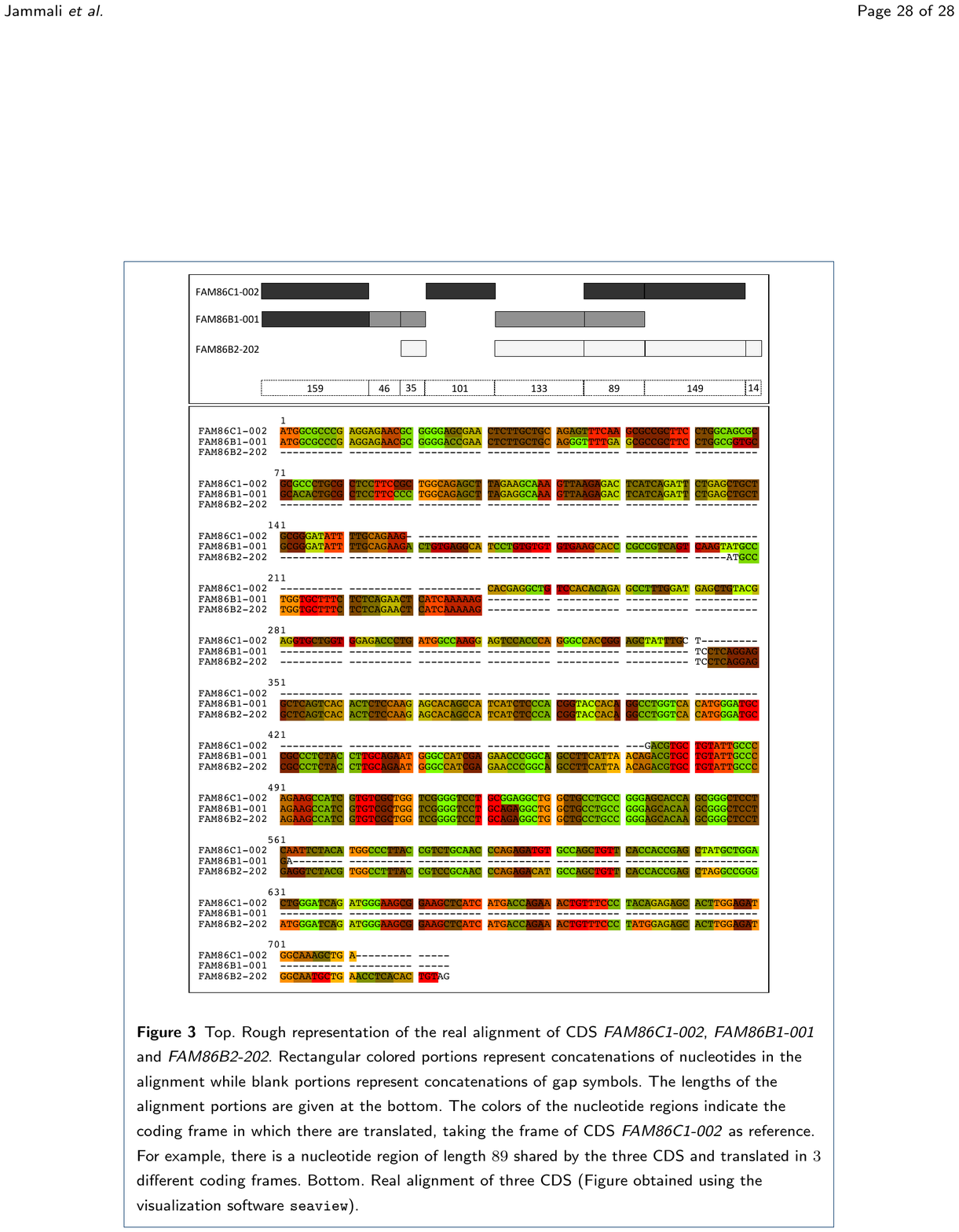}
	
 	\caption{ {\bf Top.} Rough representation of the real alignment of CDS \emph{FAM86C1-002},
 		\emph{FAM86B1-001} and \emph{FAM86B2-202}. Rectangular colored portions represent
 		concatenations of nucleotides in the alignment while blank portions represent
 		concatenations of gap symbols. The lengths of the alignment portions are given
 		at the bottom. 
 		The colors of the nucleotide regions indicate
 		the coding frame in which they are translated, taking the frame of
 		CDS \emph{FAM86C1-002} as reference.
 		For example, there is a nucleotide
 		region of length $89$ shared by the three CDS and translated in $3$
 		different coding frames. {\bf Bottom.} Real alignment of three CDS (Figure obtained using the visualization software \texttt{seaview} \cite{gouy2010seaview}). Nucleotides are colored according to the codon structure of the first CDS \emph{FAM86C1-002}.
 	}
 	\label{fig:fam86-1}
 \end{figure}

\begin{table}[ht!]
\flushleft
  \caption{Detailed description of the ten gene families of the mammalian dataset.}
\begin{tabular}{|l|c|c|c|l|l|}
\hline
Gene family & Human gene & \# of genes & \# of CDS & Length & $\frac{N*(N-1)}{2}$ \\
\hline
I (FAM86) & ENSG00000118894 & 6 & 14 & 10335 & 91 \\
II (HBG017385) & ENSG00000143867 & 6 & 10 & 8988 & 45\\
III (HBG020791) & ENSG00000179526 & 6 & 10 & 11070 & 45\\
IV (HBG004532)  & ENSG00000173020 & 17 & 33 & 52356 & 528\\
V (HBG016641) & ENSG00000147041 & 13 & 33 & 64950 & 528\\
VI (HBG014779)  & ENSG00000233803 & 28 & 44 & 45813 & 946\\
VII (HBG012748)  & ENSG00000134545 & 24 & 44 & 28050 & 946\\
VIII (HBG015928) & ENSG00000178287 & 5 & 19 & 5496 & 171\\
IX (HBG004374) & ENSG00000140519 & 13 & 30 & 36405 & 435\\
X (HBG000122) & ENSG00000105717 & 11 & 24 & 27081 & 276\\
\hline
\multicolumn{5}{|l|}{Total number of pairs of CDS} & 4011 \\
\hline
\end{tabular}
\\
For each gene family, the family identifier used in \cite{okamura2006frequent} or \cite{raes2005functional}, the Ensembl identifier of a \emph{human} gene member
    of the family,
    the number of \emph{human}, \emph{mouse} and \emph{cow} genes in the family,
    the total number of CDS of these genes, the total sum of lengths of
    these CDS and the number of distinct pairs of CDS are given.
\label{tab:tenfamiliesdetails}
\end{table}

\subsection*{\bf Evaluation strategies} 

We compared the accuracy of five pairwise global alignment methods, including the present method, for computing
CDS alignments in the presence or absence of FS translation
between the compared CDS.
The five
methods vary according to the alignment algorithm used, either the present
CDS alignment algorithm called FsePSA allowing
to penalize both FS translation initiation and extension, or the  CDS 
alignment algorithm called MACSE \cite{ranwez2011} penalizing FS
translation initiation, or the Needleman-Wunsch (NW) sequence alignment
algorithm \cite{needle} penalizing neither. Table \ref{tab:fivemethods} 
summarizes the alignment algorithm and the values of parameters used
for each of the five methods.

\begin{table}[ht!]
\flushleft
  \caption{Description of the five methods considered in the experiment.}
\begin{tabular}{|l|l|l|l|}
\hline
\texttt{Method} & \texttt{Alignment approach \&} & \texttt{FS initiation cost} & \texttt{Other parameters}\\
 & \texttt{specific parameters} &  & \\
\hline
\multirow{2}{*}{\texttt{fse}} & Present approach &  \multirow{3}{*}{$\texttt{fs\_open\_cost} = $} & $\texttt{AA gap\_open\_cost} = -11$ \\ 
& $\texttt{fs\_extend\_cost} = $ & & $\texttt{AA gap\_cost} = -1$\\
&  $-1; -0.5; -0.2$ & -10; -20; -30 & $s_{aa} = \texttt{BLOSUM62 matrix}$\\\cline{1-2}
\multirow{2}{*}{\texttt{fse0}} & Present approach & &  $s_{an} =$ \texttt{+1/-1}   \\
& $\texttt{fs\_extend\_cost} = 0$ & &  \texttt{match/mismatch}\\\cline{1-2}  
\multirow{2}{*}{\texttt{macse\_p}} & Ranwez et al. \cite{ranwez2011} & &  \\ 
& $\texttt{stop\_cost} = -100$ & & \\\cline{1-3}
\texttt{needleprot} & NW \cite{needle} at AA level & not applicable & \\ \cline{1-4}
\multirow{4}{*}{\texttt{needlenuc}} & \multirow{4}{*}{NW \cite{needle} at NT level} & \multirow{4}{*}{not applicable} & $\texttt{NT gap\_open\_cost} = -5$\\
 &  &  &  $\texttt{NT gap\_cost} = -2$\\
&  &  & $s_{an} =$ \texttt{+2/-3} \\
&  &  & \texttt{match/mismatch}\\ \cline{1-4}
\hline
\end{tabular}
\\
For each method, the alignment approach and the values
    of specific and common parameters are given.
\label{tab:fivemethods}
\end{table}

The present CDS alignment algorithm is used in two
of the five methods, namely \texttt{fse} and \texttt{fse0}. These two
methods differ according to the value given
to the parameter $\texttt{fs\_extend\_cost}$, 
either $\texttt{fs\_extend\_cost < 0}$ ($-1$, $-0.5$ or  $-0.2$)
for the method
\texttt{fse} penalizing FS translation extension, or
$\texttt{fs\_extend\_cost} = 0$ for the method
\texttt{fse0} not penalizing FS translation extension.
The pairwise version of MACSE \cite{ranwez2011} is used in the 
method called \texttt{macse\_p}.
The NW alignment algorithm is used in the last two methods,
the method called \texttt{needlenuc} computing scores and
alignments at the nucleotide level and the method called \texttt{needleprot}
at the AA level.
For all methods using both the amino acid
and nucleotide scoring functions $s_{aa}$ and $s_{an}$,
$s_{an}$ was fixed to \texttt{+1/-1}
for \texttt{match/mismatch}, so that the overall score of $3$ consecutive
nucleotide identities in an alignment scores less than the smallest
identity score in $s_{aa}$.
All other parameters shared by several methods were given the same value
for all methods. In particular, for the three methods \texttt{fse},
\texttt{fse0} and \texttt{macse\_p} penalizing FS translation initiation,
the parameter $\texttt{fs\_open\_cost}$ was given the values $ -10$, $-20$ or $-30$. 
All other parameters were fixed to the default values for the
NW algorithm implementation of NCBI Blast at the nucleotide and AA
levels \cite{ncbi-blast}. 

We used the five methods to compute pairwise alignments between the
pairs of CDS within each of the ten gene families of our dataset,
yielding $4011$ alignments in total for each of five methods.
In the absence of available benchmarks for the direct evaluation of the accuracy of
CDS alignments, we base our evaluation on four indirect strategies. 

In the first strategy, we consider the CDS multiple alignment of each gene family obtained using MACSE \cite{ranwez2011} as a benchmark. This strategy exploits the fact that multiple alignments are usually more accurate than pairwise alignments. It then assumes that the MACSE multiple alignments are closer to the reality than the pairwise alignments obtained using the five methods.
Note that all the pairwise alignment methods included in the comparison can be extended to multiple sequence alignment methods using classical strategies. Thus, the more accurate pairwise alignment methods should lead to more accurate multiple alignment methods. Here, we 
focus on the comparison of the pairwise versions of the methods.
%
In the second strategy, we consider six composition criteria for a CDS pairwise alignment called \texttt{Identity\_NT}, \texttt{Identity\_AA},  \texttt{Gap\_init},
\texttt{Gap\_length}, \texttt{FS\_init}, \texttt{FS\_length}. The definitions of these criteria are given below, and used to compare the five methods. In the third strategy, we manually build and use as a benchmark, the real multiple alignment of three CDS from three paralogous human genes of the gene family I (FAM86). In the fourth strategy, we generate and use a set of three CDS splicing orthology groups, each group containing seven existing or putative CDS from seven genes of gene family I (FAM86).

Based on the results of the large-scale experiments discussed in the following, 
the best compromise for default values of FsePSA parameters are $-30$ for 
\texttt{fs\_open\_cost} and $-1$ for \texttt{fs\_extend\_cost}.

\subsection*{\bf Discussion}
\subsubsection*{\bf First strategy: Using MACSE multiple alignments as benchmark}
MACSE \cite{ranwez2011} was used with its default parameters (\texttt{fs\_open\_cost} = $-30$,
\texttt{stop\_cost} = $-100$, $s_{aa} = \texttt{BLOSUM62 matrix}$, \texttt{gap\_open\_cost} = $-7$,
\texttt{gap\_cost} = $-1$) to compute the CDS multiple alignment of each of the ten gene families. For each MACSE multiple alignment of $N$ CDS, we consider the $\frac{N(N-1)}{2}$ induced pairwise alignments as a benchmark. In total, we then obtained a benchmark composed of $4011$ pairwise alignments. In order to compare an alignment $(A',B')$ obtained with one of the five methods to the corresponding alignment $(A",B")$ in the benchmark, we computed the number of nucleotides aligned in $(A',B')$ with the same partner as in the benchmark alignment $(A",B")$.

Table \ref{tab:scorecomparison} shows the overall percentage of nucleotides aligned with the same partners as in the benchmark for each of the compared methods, for varying \texttt{fs\_open\_cost} ($-10$, $-20$ and $-30$) and \texttt{fs\_extend\_cost} ($-1$, $-0.5$ and $-0.2$). It shows that the different versions of the \texttt{fse} method and the \texttt{fse0} method have the best scores greater than $79.4\%$, followed by the \texttt{needleprot} method with a score of $78.82\%$. On the opposite, the \texttt{needlenuc} and \texttt{macse\_p} method with \texttt{$fs\_open\_cost $}= $-30$ return the worst scores, respectively $50.95\%$ and $47.35\%$. These results also show that the \texttt{fse} method is more
robust to the \texttt{fs\_open\_cost} parameter changes
as compared to the \texttt{macse\_p} method, whose scores show a large variation between $47.35\%$ and $78.29\%$. Note that the \texttt{needlenuc} and \texttt{needleprot}
do not account for the \texttt{fs\_open\_cost} parameter.

\begin{landscape}
\begin{table}[ht!]
  \flushleft
	\caption{Comparison with MACSE multiple alignments benchmark. }
	\begin{tabular}{|l|l|l|l|l|l|l|l|}
		\hline
		\texttt{fs\_open\_cost} & \texttt{fse0} & \begin{tabular}[c]{@{}l@{}}\texttt{fse}\\ (-1)\end{tabular} & \begin{tabular}[c]{@{}l@{}}\texttt{fse}\\ (-0.5)\end{tabular} & \begin{tabular}[c]{@{}l@{}}\texttt{fse}\\ (-0.2)\end{tabular}& \texttt{macse\_p} &  \texttt{needlenuc} &
		 \texttt{needleprot}  \\
		\hline	
		\texttt{-10} & \textbf{79.58 (1404)} & 79.40 (1364) & \textbf{79.52 (1415) }& \textbf{79.58 (1433)} & 77.17 (1076) &  \multirow{3}{*}{50.95 (255)} &
		\multirow{3}{*}{78.82 (972)}   \\
		\cline{1-6} 		

		\texttt{-20} & $\textbf{79.68 (1550)}$ & $\textbf{79.68 (1526)}$ & $\textbf{79.65 (1558)}$ & $\textbf{79.67 (1552)}$ & 78.29 (1389) &   &
		  \\
		\cline{1-6}
		\texttt{-30} & $\textbf{79.75 (1558)}$ & $\textbf{79.47 (1529)}$ & $\textbf{79.60 (1546)}$ & $\textbf{79.63 (1547)}$ & 47.35 (742) &   &
		  \\
		\hline
	\end{tabular}
 
        Percentage of nucleotides aligned with the same partner as in the benchmark alignments induced by the MACSE multiple alignments, for each method for varying \texttt{fs\_open\_cost} ($-10$, $-20$ and $-30$) and \texttt{fs\_extend\_cost} ($-1$, $-0.5$ and $-0.2$). In each case, the
number of CDS pairs with an alignment that presents the highest similarity with the corresponding benchmark alignment as compared to the other methods is given in parenthesis. The best results are indicated in bold.
\label{tab:scorecomparison}
\end{table}

\begin{table}[ht!]
 \flushleft
\caption{Values of the six criteria for the \texttt{\bf noFS} dataset (variations as compared to \textbf{needleprot}).}
\begin{tabular}{|l|l|l|l|l|l|l|l|}
\hline
\begin{tabular}[c]{@{}l@{}}\texttt{fs\_open\_cost}\\\texttt{(\# CDS pairs)} \end{tabular} & Method     & \begin{tabular}[c]{@{}l@{}}\texttt{Identity}\_\\\texttt{NT} \end{tabular} & \begin{tabular}[c]{@{}l@{}}\texttt{Identity}\_\\\texttt{AA} \end{tabular} & \begin{tabular}[c]{@{}l@{}}\texttt{Gap}\_\\\texttt{init}\end{tabular} & \begin{tabular}[c]{@{}l@{}}\texttt{Gap}\_\\\texttt{length}\end{tabular} & \begin{tabular}[c]{@{}l@{}}\texttt{FS}\_\\\texttt{init}\end{tabular} & \begin{tabular}[c]{@{}l@{}}\texttt{FS}\_\\\texttt{length}\end{tabular} \\ \hline
    \hline

    \multirow{5}{*}{-10 (1672)}  & \texttt{fse0}     & \multirow{2}{*}{3281 (1158)} & 	\multirow{2}{*}{5376 (1222)}& 	\multirow{2}{*}{-495 (1606)}& 	\multirow{2}{*}{-2718 (1521)}& 	\multirow{2}{*}{0 (1672)}& 	\multirow{2}{*}{0 (1672)}  \\\cline{2-2}
  & \texttt{fse}     & & 	& 	& 	& 	&   \\ \cline{2-8}
    &     \texttt{macse\_p}  & 8120 (955)&	27942 (676)	&3701 (711)	&9618 (1102)&	0 (1672)&	0 (1672)
  \\ \cline{2-8} 
    &     \texttt{needlenuc} & 170239 (156)&	-82002(442)&	104811 (218)&	21422(427)&	44488 (256)&263365 (256) \\ \cline{2-8} 
    &  \textbf{needleprot} & \textbf{1090957}      & \textbf{2047608}      & \textbf{10230}     & \textbf{530688 }     & \textbf{0}        & \textbf{0}  \\
  \hline
  \hline
    \multirow{5}{*}{-20 (3441)}  & \texttt{fse0}     & \multirow{2}{*}{1409 (2612)}& 	\multirow{2}{*}{-8622 (2672)}& 	\multirow{2}{*}{-3564 (3169)}& 	\multirow{2}{*}{-9984 (3057)}& 	\multirow{2}{*}{0 (3441)}& 	\multirow{2}{*}{0 (3441)}  \\\cline{2-2}
  & \texttt{fse}     & & 	& 	& 	& 	&   \\ \cline{2-8}

  &\texttt{macse\_p}  & 24909	(1437)&95844 (1011)&	13778 (1076)&	30884 (1791)&	0	(3441)&0 (3441)
   \\ \cline{2-8}  
  & needleenuc & 547203 (176)&	-177285(680)	&317256 (219)	&52510 (552)	&138204 (257)	&844401 (257)
  \\  \cline{2-8}
&\textbf{needleprot}    & \textbf{2000228}      & \textbf{3494760 }     & \textbf{31793 }    & \textbf{1313658}     & \textbf{0}        & \textbf{0}

  \\
  \hline
  \hline
    \multirow{5}{*}{-30 (3740)}  & \texttt{fse0}     & \multirow{2}{*}{1368 (2834)}& 	\multirow{2}{*}{-10788 (2912)}& 	\multirow{2}{*}{-4047 (3448)}& 	\multirow{2}{*}{-11316 (3321)}& 	\multirow{2}{*}{0 (3740)}& 	\multirow{2}{*}{0 (3740)}  \\\cline{2-2}
  & \texttt{fse}     & & 	& 	& 	& 	&   \\ \cline{2-8}

    &     \texttt{macse\_p}  & 27840 (1547)&	106512 (1078)&	15561 (1117)&	34726  (1846)&	0 (3740)&	0 (3740)
  \\ \cline{2-8} 
   & \texttt{needlenuc} & 610305 (177)&	-192231	(709)&351748 (219)&	47356 (573)&	154255 (257)&	948418 (257)
     \\ \cline{2-8}
& \textbf{needleprot}   &\textbf{ 2143630} & \textbf{3715632}      & \textbf{35296 }    & \textbf{1439784}     &\textbf{ 0 }       & \textbf{0 }   
      \\ \hline
\end{tabular}
\\
For varying values of the parameter \texttt{fs\_open\_cost}, the number of CDS pairs in the dataset is given. The values of the criteria for the reference method ``\texttt{needleprot}'' are indicated in bold characters. For each of the other methods (\texttt{fse}, \texttt{fse0}, \texttt{macse\_p}, \texttt{needlenuc}), the variations of the criteria values as compared to the reference values are given.
For each criteria and each method, the number of CDS pairs that have the closest value to the reference \texttt{needleprot} value is given in parentheses.
\label{tab:nofs} 
\end{table}
\end{landscape}

\subsubsection*{\bf Second strategy: Using six composition criteria for CDS pairwise alignment}

Six criteria were defined and used to compare the five pairwise alignment methods. 
Given a pairwise CDS alignment, the first criterion \texttt{Identity\_NT} counts 
the number of gap-free columns in
the alignment containing a nucleotide match. The second criterion \texttt{Identity\_AA} counts
the number of \texttt{IM} and \texttt{FSext} codons $c$ in the alignment that
are aligned with a triplet of nucleotides yielding the same amino acid as $c$.
The third criterion \texttt{Gap\_init} is the number of gap-containing columns in the
alignment, either insertion or deletion columns that are preceded by a
different type of column.
The fourth criterion \texttt{Gap\_length} is the overall number of gap-containing columns
in the alignment.
The fifth criterion \texttt{FS\_init} is the number of FS translation segments found in the alignment.
The last criterion \texttt{FS\_length} is
the overall number of columns in the alignment intersecting a \texttt{FSext} codon.   

Note that the definitions of the six criteria exploit the definitions
of codon sets used in Definition \ref{def:score} but they are independent of
any alignment scoring scheme.
For example, for the alignment depicted in Figure \ref{fig:CodonSets},
$\texttt{Identity\_NT}=28$, counting all gap-free columns except the
five columns at the positions $\{9,12,15,42,45\}$ containing
a nucleotide mismatch.
$\texttt{Identity\_AA}=14$, counting all \texttt{IM} and \texttt{FSext}
codons except the two \texttt{IM} codons \texttt{AAG} and \texttt{AAT}
ending at position $15$ yielding two different amino acids $K$ and $N$,
and the \texttt{FSext} codon \texttt{AAT} ending at position $42$
yielding the amino acid $N$ different from the amino acid $K$ yielded
by the triplet \texttt{AAG}. $\texttt{Gap\_init} = 7$, counting the
positions $\{4,16,22,27,31,36,44\}$.
$\texttt{FS\_init} = 3$, counting the positions $\{18,28,39\}$.
The two last criteria have the values $\texttt{Gap\_length} = 15$
and $\texttt{FS\_length} = 11$.

For each of the nine cases obtained by combining the values of the parameters
\texttt{fs\_open\_cost} ($-10$, $-20$ or $-30$) and \texttt{fs\_extend\_cost} ($-1$, $-0.5$ or $-0.2$), we considered the $4011$ pairs of CDS from the ten gene families dataset, and partitioned them into three sets.  For each case, the first set called the \texttt{noFS} dataset is composed of the pairs of CDS for which the pairwise alignments obtained using the \texttt{fse0}, \texttt{fse} and \texttt{macse\_p} methods all have the criteria $\texttt{FS\_init} = 0$. 
The second set called the \texttt{FS} dataset is composed of the pairs of CDS for which the alignments obtained using the \texttt{fse0}, \texttt{fse} and \texttt{macse\_p} methods all have the criteria $\texttt{FS\_init} > 0$. The third set called  the \texttt{ambiguFS} dataset is composed of the remaining pairs of CDS.

Note that, in all nine cases, the set of CDS pairs
for which $\texttt{FS\_init} = 0$ with the \texttt{macse\_p} method was strictly
included in the set of CDS pairs for which $\texttt{FS\_init} = 0$ with
the \texttt{fse} method. 
For each of the nine cases, we computed the overall value of the six criteria for each  method (\texttt{fse0}, \texttt{fse}, \texttt{macse\_p}, \texttt{needlenuc} and \texttt{needleprot}) and each dataset (\texttt{noFS}, \texttt{FS} and \texttt{ambiguFS}). Tables
\ref{tab:nofs}, \ref{tab:fsonly} and \ref{tab:fsambigu} present the results.

\paragraph{\bf Results for the \texttt{noFS} datasets.} For the \texttt{noFS} datasets, we assume
that the real alignments should not contain FS translations. So, the \texttt{needleprot} method most likely
computes the more accurate alignments since it does not allow any FS
translation in the alignments. Indeed, it computes a maximum score NW alignment
at the AA level and back-translates this alignment at the nucleotide level.
We then take the \texttt{needleprot} result as a reference for the
\texttt{noFS} dataset, in all cases. 
By construction of the \texttt{noFS} dataset, for  a fixed value of the parameter \texttt{fs\_open\_cost}, the \texttt{fse0} and \texttt{fse}
methods necessarily return two alignments with the same similarity score for each pair of CDS of the dataset. Indeed, we observed that, for each value of \texttt{fs\_open\_cost} ($-10$, $-20$ or $-30$), the alignments obtained using the methods \texttt{fse0} or \texttt{fse} with varying values of the parameter \texttt{fs\_extend\_cost} are unchanged.

Table \ref{tab:nofs} summarizes 
the results for $\texttt{fs\_open\_cost}=-10$, $-20$ and $-30$, presenting the results of the varying versions of \texttt{fse} and \texttt{fse0} in a single line in the $3$ cases. 
It shows that the results of the
\texttt{fse} and \texttt{fse0} methods are the closest to the reference for all the six criteria in all cases.
However, they slightly overestimate or underestimate the criteria.
The tendency of overestimating the \texttt{Identity\_AA} and all other criteria is particularly
accentuated for the \texttt{macse\_p} method 
as compared to the \texttt{fse} and \texttt{fse0} methods, in all cases.
On the opposite, the  \texttt{needlenuc}
method always largely underestimates  the \texttt{Identity\_AA}, while overestimating all other criterion.

\vspace{-0.3cm}
\paragraph{\bf Results for the \texttt{FS} datasets.} For the \texttt{FS} datasets, we assume
that the real alignments must contain FS translations. So, the \texttt{needleprot} method can no longer
produce the most accurate results. On the contrary, it is most likely that it
underestimates the \texttt{Identity\_AA} criterion. Indeed, it correctly aligns AA in CDS regions that are free of FS translation, but in FS translation regions, it either leads to several AA mismatches in the case of high mismatches scores, or to an overestimation of the \texttt{Gap\_init} criterion. 
As expected, we observed that the value of \texttt{Identity\_AA} for the 
\texttt{needleprot} method was always the lowest (data shown in the Additional file 3). We focus on the four other methods. 

Table \ref{tab:fsonly} summarizes 
the results for the nine cases considered.
For the \texttt{Identity\_NT} and \texttt{Identity\_AA} criteria,
the differences between the values for the four methods are negligible.
The main differences between the results reside in the values of the \texttt{Gap\_init} and \texttt{FS\_init} 
criteria. In particular, the \texttt{FS\_init} 
criterion is useful to compare the accuracy of the methods for correctly  identifying real FS translation regions. In \cite{okamura2006frequent} (for family I) and \cite{raes2005functional} (for families II to X), at most one FS translation region was detected and manually validated for each pair of CDS of the ten gene families.
 So, the expected number of FS 
translation regions per alignment in the \texttt{FS} data is
 $1$. In Table \ref{tab:fsonly}, we observe that, in all cases, the \texttt{fse} and  \texttt{fse0} methods are the only methods for which the average numbers of \texttt{FS\_init}
 are close to $1$ with +/- standard error values smaller than $1$. The \texttt{macse} method and especially the \texttt{needlenuc} method overestimate the number of FS translation regions per alignment with large standard error values in all cases.

\paragraph{\bf Results for the \texttt{ambiguFS} datasets.}
For the \texttt{ambiguFS} datasets, all methods do not agree
for the presence or absence of FS translation regions between the pairs of CDS. Note that the \texttt{needlenuc} method reports FS translations for all pairs of CDS, with the highest average number of FS translation regions per alignment in all cases (data shown in the Additional file 3). As \texttt{needlenuc} is already shown to perform poorly in both  the absence and the presence of FS translation regions, we focus on the four other methods. Table \ref{tab:fsambigu} summarizes the results. We observe that, for all criteria, \texttt{macse\_p} has higher values than \texttt{fse0},  \texttt{fse} and \texttt{needleprot} that have similar values. The most significant difference between the results resides in the values for the \texttt{FS\_init} and \texttt{FS\_length} criteria. The \texttt{fse} method always reports a null or a very small number of FS regions with an average \texttt{FS\_init} equals to $1$ as expected. In all cases, the \texttt{fse0} and \texttt{macse} methods overestimate the number of FS translation regions per alignment.

\begin{landscape} 
\begin{table}[ht!]
\flushleft
\caption{Values of the six criteria for the \texttt{\bf FS} dataset.}
\label{tab:fsonly}
\begin{tabular}{|l|l|l|l|l|l|l|l|l|}
\hline
\begin{tabular}[c]{@{}l@{}}fs\_open\_\\ cost\end{tabular} & \begin{tabular}[c]{@{}l@{}}fs\_extend\_cost\\ (\# CDS pairs)\end{tabular} & Method& \begin{tabular}[c]{@{}l@{}}Identity\_\\ NT\end{tabular} & \begin{tabular}[c]{@{}l@{}}Identity\_\\ AA\end{tabular} & \begin{tabular}[c]{@{}l@{}}Gap\_\\ init\end{tabular} & \begin{tabular}[c]{@{}l@{}}Gap\_\\ length\end{tabular} & \begin{tabular}[c]{@{}l@{}}FS\_init\\  (avg)\end{tabular}  & \begin{tabular}[c]{@{}l@{}}FS\_\\ length\end{tabular} \\ \hline \hline
\multirow{12}{*}{-10}    & \multirow{4}{*}{\begin{tabular}[c]{@{}l@{}}-1\\ (212)\end{tabular}}  & \texttt{fse0}       & 166002        & 325212        & 895        & 60662        & 226 (1.06 $\pm 0.25$)       & 20219       \\ \cline{3-9} 
       &    & \texttt{{\bf fse}}        & 165720        & 325026        & 901        & 60624        & 216 ({\bf 1.01} $\pm 0.14$)      & 18705       \\ \cline{3-9} 
       &    & \texttt{macse\_p}   & 166167        & 324999        & 1445       & 61562        & 432  (2.03 $\pm 3.06$ )     & 22742      \\ \cline{3-9} 
       &    & \texttt{needlenuc}  & 172959        & 321348        & 5053       & 60038        & 2103 (9.91 $\pm 26.73$)     & 29616       \\
\cline{2-9}  
       & \multirow{4}{*}{\begin{tabular}[c]{@{}l@{}}-0.5\\ (386)\end{tabular}}& \texttt{fse0}       & 252590        & 464712        & 2400       & 114859       & 482 (1.24 $\pm 0.47$)      & 31777       \\ \cline{3-9} 
       &    & \texttt{{\bf fse}}        & 251647        & 463407        & 2387       & 115269       & 401  ({\bf 1.03} $\pm 0.19$)     & 26982       \\ \cline{3-9} 
       &    & \texttt{macse\_p}   & 253715        & 465594        & 4161       & 117165       & 1306 (3.38 $\pm 4.53$)      & 41742       \\ \cline{3-9} 
       &    & \texttt{needlenuc}  & 279682        & 452673        & 19408      & 113195       & 8032  (20.80 $\pm 31.02$)  & 68226       \\ 
\cline{2-9}
       & \multirow{4}{*}{\begin{tabular}[c]{@{}l@{}}-0.2\\ (619)\end{tabular}}& \texttt{fse0}       & 371062        & 641748        & 5334       & 204370       & 805   (1.30 $\pm 0.52$)    & 43381       \\ \cline{3-9} 
       &    & \texttt{{\bf fse}}      & 370260        & 640377        & 5270       & 204806       & 688  (\textbf{1.11} $\pm 0.33$)     & 37376       \\ \cline{3-9} 
       &    & \texttt{macse\_p}   & 374729        & 646893        & 9308       & 208344       & 2893  (4.67 $\pm 5.34$)   & 72030       \\ \cline{3-9} 
       &    & \texttt{needlenuc}  & 442564        & 618270        & 48799      & 209420       & 19751  (31.90 $\pm 34.48$)   & 141217      \\ 
\hline
\hline
\multirow{12}{*}{-20}    & \multirow{4}{*}{\begin{tabular}[c]{@{}l@{}}-1\\ (161)\end{tabular}}  & \texttt{fse0}       & 123814        & 244350        & 461        & 40315        & 168   (1.04 $\pm 0.20$)     & 17770       \\ \cline{3-9} 
       &    & \texttt{{\bf fse}}  & 123610        & 244149        & 468        & 40195        & 164 (\textbf{1.01} $\pm 0.14$)      & 16924       \\ \cline{3-9} 
       &    & \texttt{macse\_p}   & 123541        & 243591        & 709        & 40585        & 223 (1.38 $\pm 1.03$)      & 18119       \\ \cline{3-9} 
       &    & \texttt{needlenuc}  & 125452        & 242742        & 1493       & 39031        & 650 (4.03 $\pm 5.85$)      & 19405       \\ 
\cline{2-9}
       & \multirow{4}{*}{\begin{tabular}[c]{@{}l@{}}-0.5\\ (189)\end{tabular}}      & \texttt{fse0}       & 147476        & 291147        & 549        & 49485        & 197  (1.04 $\pm 0.20$)     & 19599       \\ \cline{3-9} 
       &    & \texttt{{\bf fse}}  & 147401        & 291048        & 557        & 49363        & 194    (\textbf{1.02} $\pm 0.16$)   & 19279       \\ \cline{3-9} 
       &    & \texttt{macse\_p}   & 147143        & 290271        & 838        & 49841        & 260  (1.37 $\pm 0.98$)    & 19976       \\ \cline{3-9} 
       &    & \texttt{needlenuc}  & 149551        & 289086        & 1872       & 47515        & 808  (4.27 $\pm 6.17$)     & 21440       \\
\cline{2-9} 
       & \multirow{4}{*}{\begin{tabular}[c]{@{}l@{}}-0.2 \\ (216)\end{tabular}}        & \texttt{fse0}       & 161906        & 318117        & 723        & 55383        & 225 (1.04 $\pm 0.20$)      & 21300       \\ \cline{3-9} 
       &    & \texttt{{\bf fse}}  & 161865        & 318099        & 732        & 55393        & 223 (   \textbf{1.03} $\pm 0.18$)   & 21115       \\ \cline{3-9} 
       &    & \texttt{macse\_p}   & 161622        & 317205        & 1061       & 55715        & 306  (1.41 $\pm 0.99$)     & 21997       \\ \cline{3-9} 
       &    & \texttt{needlenuc}  & 165260        & 315531        & 2851       & 53613        & 1186 (5.49 $\pm 6.82$)     & 24403       \\  
\hline
\hline
\multirow{12}{*}{-30}    & \multirow{4}{*}{\begin{tabular}[c]{@{}l@{}}-1\\ (71)\end{tabular}}   & \texttt{fse0}       & 47071& 91266& 230        & 26303        & 76     (1.07 $\pm 0.26$)   & 12845       \\ \cline{3-9} 
       &    & \texttt{{\bf fse}}  & 46872& 91032& 233        & 26183        & 72  (\textbf{1.01} $\pm 0.12$)      & 12302       \\ \cline{3-9} 
       &    & \texttt{macse\_p}   & 46936& 90876& 372        & 26325        & 118  (1.66 $\pm 1.25$)    & 13142       \\ \cline{3-9} 
       &    & \texttt{needlenuc}  & 48290& 91017& 866        & 26135        & 391   (5.50 $\pm 5.67$)    & 13829       \\
\cline{2-9} 
       & \multirow{4}{*}{\begin{tabular}[c]{@{}l@{}}-0.5\\ (154)\end{tabular}}& \texttt{fse0}       & 120558        & 237768        & 445        & 37975        & 159    (1.03 $\pm 0.18$)   & 17554       \\ \cline{3-9} 
       &    & \texttt{{\bf fse}}  & 120504        & 237678        & 452        & 37851        & 157     (\textbf{1.01} $\pm 0.14$)  & 17319       \\ \cline{3-9} 
       &    & \texttt{macse\_p}   & 120338        & 237084        & 691        & 38047        & 212     (1.37 $\pm 1.00$)  & 17926       \\ \cline{3-9} 
       &    & \texttt{needlenuc}  & 122084        & 236904        & 1321       & 37531        & 575     (3.73 $\pm 5.14$)  & 18877       \\ 
\cline{2-9} 
       & \multirow{4}{*}{\begin{tabular}[c]{@{}l@{}}-0.2\\ (178)\end{tabular}}& \texttt{fse0}       & 137451        & 271041        & 525        & 46049        & 184 (1.03 $\pm 0.18$)       & 18995       \\ \cline{3-9} 
       &    & \texttt{{\bf fse}}  & 137440        & 271008        & 531        & 45917        & 183 (\textbf{1.02} $\pm 0.17$)      & 18872       \\ \cline{3-9} 
       &    & \texttt{macse\_p}   & 137175        & 270258        & 803        & 46187        & 244  ( 1.37 $\pm 0.97$)    & 19395       \\ \cline{3-9} 
       &    & \texttt{needlenuc}  & 139489        & 269139        & 1803       & 44303        & 779     (4.38 $\pm 6.27$)  & 20859       \\  
\hline
\end{tabular}
\\  For varying values of the parameters \texttt{fs\_open\_cost} and \texttt{fs\_extend\_cost}, the number of CDS pairs in the dataset is given. The values of the criteria for the \texttt{fse}, \texttt{fse0}, \texttt{macse\_p}, \texttt{needlenuc} methods are indicated. For each method, the average number of \texttt{FS\_init} per alignment, with corresponding standard error values are also indicated.
\end{table}
\end{landscape}

\begin{landscape} 
\begin{table}[ht!]
\flushleft
\caption{Values of the six criteria for the \texttt{\bf ambiguFS} dataset.}
\label{tab:fsambigu}

\begin{tabular}{|l|l|l|l|l|l|l|l|l|}
\hline
\begin{tabular}[c]{@{}l@{}}fs\_open\_\\ cost\end{tabular} & \begin{tabular}[c]{@{}l@{}}fs\_extend\_cost\\ (\# CDS pairs)\end{tabular} & Method& \begin{tabular}[c]{@{}l@{}}Identity\_\\ NT\end{tabular} & \begin{tabular}[c]{@{}l@{}}Identity\_\\ AA\end{tabular} & \begin{tabular}[c]{@{}l@{}}Gap\_\\ init\end{tabular} & \begin{tabular}[c]{@{}l@{}}Gap\_\\ length\end{tabular} & \begin{tabular}[c]{@{}l@{}}FS\_init\\  (avg)\end{tabular}  & \begin{tabular}[c]{@{}l@{}}FS\_\\ length\end{tabular} \\ \hline \hline
\multirow{12}{*}{-10}    & \multirow{4}{*} {-1 (2127)}& fse0 (862) & 1095102       & 1737105       & 24489      & 908218       & 1111   (1.28 $\pm 0.54$)       & 42730       \\ \cline{3-9} 
       &    & fse  & 1086546       & 1719774       & 23483      & 906540       & 0& 0  \\ \cline{3-9} 
       &    & macse\_p (2076)     & 1124316       & 1790199       & 45335      & 936002       & 12436   (5.99 $\pm 4.96$)  & 216772      \\ \cline{3-9} 
       &    & needleprot    & 1085007       & 1723950       & 25288      & 916518       & 0& 0     \\ \cline{2-9} 
       & \multirow{4}{*}{-0.5 (1953)}        & fse0 (688)   & 1008514       & 1597605       & 22984      & 854021       & 855   (1.24 $\pm 0.53$ )    & 31172       \\ \cline{3-9} 
       &    & fse (2)      & 1003293       & 1587258       & 22102      & 853793       & 2     (1.0 $\pm 0$ )    & 80 \\ \cline{3-9} 
       &    & macse\_p  (1902)       & 1036768       & 1649604       & 42619      & 880399       & 11562  (6.07 $\pm 4.91$)   & 197772      \\ \cline{3-9} 
       &    & needleprot    & 1001957       & 1591134       & 23790      & 863199       & 0& 0         \\ \cline{2-9} 
       & \multirow{4}{*}{-0.2 (1720)}        & fse0  (455)   & 890042        & 1420569       & 20050      & 764510       & 532    (1.16 $\pm 0.48$)   & 19568       \\ \cline{3-9} 
       &    & fse (3)       & 887372        & 1415403       & 19465      & 764162       & 3  (1.0 $\pm 0$)       & 92 \\ \cline{3-9} 
       &    & macse\_p  (1669)       & 915754        & 1468305       & 37472      & 789220       & 9975  (5.97 $\pm 4.75$)    & 167484      \\ \cline{3-9} 
       &    & needleprot    & 886178        & 1418748       & 20955      & 772272       & 0& 0         \\ \hline \hline
\multirow{12}{*}{-20}    & \multirow{4}{*}{-1 (409)}  & fse0 (100)    & 219277        & 358554        & 3633       & 153487       & 120     (1.2 $\pm 0.40$)  & 6937        \\ \cline{3-9} 
       &    & fse  & 216936        & 353586        & 3619       & 152391       & 0& 0  \\ \cline{3-9} 
       &    & macse\_p (403) & 225976        & 374391        & 6509       & 158165       & 1348   (3.34 $\pm 3.00$)   & 36179       \\ \cline{3-9} 
       &    & needleprot    & 216842        & 355656        & 4172       & 153957       & 0& 0         \\ \cline{2-9} 
       & \multirow{4}{*}{-0.5 (381)}& fse0 (72)     & 195615        & 311757        & 3545       & 144317       & 91 (1.26 $\pm 0.44$)        & 5108        \\ \cline{3-9} 
       &    & fse  & 194048        & 308448        & 3505       & 144045       & 0& 0  \\ \cline{3-9} 
       &    & macse\_p (375)  & 202374        & 327711        & 6380       & 148909       & 1311 (3.49 $\pm 3.05$)     & 34322       \\ \cline{3-9} 
       &    & needleprot    & 193980        & 310632        & 4051       & 145563       & 0& 0        \\ \cline{2-9} 
       & \multirow{4}{*}{-0.2 (354)}& fse0 (45)   & 181185        & 284787        & 3371       & 138419       & 63   (1.4 $\pm 0.49$)     & 3407        \\ \cline{3-9} 
       &    & fse (1)       & 180151        & 282693        & 3344       & 138217       & 1    (1.0 $\pm 0$)    & 40 \\ \cline{3-9} 
       &    & macse\_p (348) & 187895        & 300777        & 6157       & 143035       & 1265  (3.63 $\pm 3.11$)  & 32301       \\ \cline{3-9} 
       &    & needleprot    & 180116        & 284946        & 3883       & 139731       & 0& 0         \\ \hline \hline
\multirow{15}{*}{-30}    & \multirow{4}{*}{-1 (200)}  & fse0  (119)     & 151090        & 289617        & 805        & 42437        & 120    (1.01 $\pm 0.09$)   & 6818        \\ \cline{3-9} 
       &    & fse  & 147590        & 282018        & 852        & 40221        & 0& 0  \\ \cline{3-9} 
       &    & macse\_p  (200) & 152626        & 292254        & 1309       & 43043        & 378 (1.89 $\pm 2.16$)      & 14515       \\ \cline{3-9} 
       &    & needleprot    & 147228        & 281472        & 933        & 40455        & 0& 0         \\ \cline{2-9} 
       & \multirow{4}{*}{-0.5 (117)}& fse0 (36)     & 77603& 143115        & 590        & 30765        & 37       (1.02 $\pm 0.16$) & 2109        \\ \cline{3-9} 
       &    & fse  & 76678& 141108        & 626        & 29913        & 0& 0  \\ \cline{3-9} 
       &    & macse\_p (117) & 79224& 146046        & 990        & 31321        & 284    (2.42 $\pm 2.65$)   & 9731        \\ \cline{3-9} 
       &    & needleprot    & 76561& 141036        & 703        & 30099        & 0& 0     \\ \cline{2-9} 
       & \multirow{4}{*}{-0.2 (93)} & fse0 (12)   & 60710& 109842        & 510        & 22691        & 12    (1.0 $\pm 0$)    & 668\\ \cline{3-9} 
       &    & fse  & 60407& 109170        & 518        & 22491        & 0& 0  \\ \cline{3-9} 
       &    & macse\_p (93) & 62387& 112872        & 878        & 23181        & 252   (2.70 $\pm 2.89$)    & 8262        \\ \cline{3-9} 
       &    & needleprot    & 60270& 109122        & 581        & 22677        & 0& 0        \\ \hline
\end{tabular}
\\ For varying values of the parameters \texttt{fs\_open\_cost} and \texttt{fs\_extend\_cost}, and for each method, the number of CDS pairs displaying a FS translation is given. The values of the criteria for each method are indicated. For each method, the average number of \texttt{FS\_init} per alignment, with corresponding standard error values are also indicated.
\end{table}
\end{landscape} 

\subsubsection*{\bf Third strategy: Using a 3-CDS manually-built benchmark}

We manually built the real pairwise alignments of three CDS from
three paralogous human genes of gene family I, the CDS  \emph{FAM86C1-002}
coding for protein \emph{ENSP00000352182.4}, \emph{FAM86B1-001} coding for
protein \emph{ENSP00000431362.1} and \emph{FAM86B2-202} coding for protein
\emph{ENSP00000311330.6}. The real multiple alignment of the three CDS is
roughly depicted and detailed in Figure \ref{fig:fam86-1}. 
From Figure
\ref{fig:fam86-1}, we observe 
that \emph{FAM86C1-002} shares
 with \emph{FAM86B1-001} a nucleotide region of length $159$ translated
 in the same frame and a nucleotide region of length $89$ with FS translation,
 while it only shares with \emph{FAM86B2-202} a nucleotide region of
 length $238$ ($89+149$) entirely under FS translation. It is then clear
 that CDS \emph{FAM86C1-002} and \emph{FAM86B1-001} are the most similar. Figure \ref{fig:fam86-1} also shows that each pair of CDS shares a single FS translation region.
  
 Table \ref{tab:result-fam86} shows the normalized pairwise similarity scores 
and the number of FS translation regions computed by the five alignment methods (the pairwise alignments computed by the five methods with varying \texttt{fs\_open\_cost}  and \texttt{fs\_extend\_cost} are given in the Additional file 4).
It shows that  \texttt{needleprot} and \texttt{fse} (in all cases where \texttt{fs\_extend\_cost}= -1) are 
the only two methods that allow to infer that  \emph{FAM86C1-002} and \emph{FAM86B1-001} are the most
 similar. Table \ref{tab:result-fam86} also illustrates the fact that \texttt{needlenuc} and \texttt{macse\_p} strongly overestimate the number of FS translation regions per alignment in all cases. The \texttt{fse} method with the parameters \texttt{fs\_open\_cost}= -10 and \texttt{fs\_extend\_cost}= -1
 is the only method that allows to infer that \emph{FAM86C1-002} and \emph{FAM86B1-001} are the most similar and to detect a single FS translation region for each alignment.

\begin{table}[ht!]
\flushleft
	\caption{Pairwise similarity scores and number of FS translation regions computed by the methods.}
	\begin{tabular}{|l|l|l|l|l|}
		\hline
	\texttt{fs\_open\_cost}	&\texttt{Method} & \emph{C1-002} vs \emph{B1-001} &  \emph{C1-002} vs \emph{B2-202} &  \emph{B1-001} vs \emph{B2-202}\\
		\hline
		\multirow{5}{*}{-10} &\texttt{fse0 } &  0.42 (1)& 0.58  (2)&  0.45 (1) \\
       	\cline{2-5}
		      &  \texttt{fse (-1)} & 0.33 (1) &0.27 (1) &0.18 (1)\\
      	\cline{2-5}
        		
		     &   \texttt{fse (-0.5)}& 0.37 (1) &0.43 (1) &0.31 (1)\\
                                
         \cline{2-5}
        		
		      &  \texttt{fse (-0.2)} &0.40 (1)  &0.52 (1)&0.39 (1) \\       

         \cline{2-5}
		      & \texttt{macse\_p} &0.40 (4) & 0.54(6) & 0.44 (1)	\\
     	\hline  \hline
        
       \multirow{5}{*}{-20} &\texttt{fse0 } & 0.39 (1) & 0.54  (1)& 0.41 (1)\\
       	\cline{2-5}
		      &  \texttt{fse (-1)} & 0.36 (0) &0.24 (1)  & 0.14 (1)\\
      	\cline{2-5}
        		
		     &   \texttt{fse (-0.5)} & 0.34 (1) &0.39  (1)&0.28 (1)\\
                                
         \cline{2-5}
        		
		      &  \texttt{fse (-0.2)} &  0.37 (1)& 0.48 (1) &0.36 (1)\\       

         \cline{2-5}
		      & \texttt{macse\_p} & 0.33 (4)& 0.47 (6)&0.35	(1)\\
     	\hline  \hline
       \multirow{5}{*}{-30} &\texttt{fse0 } & 0.35 (1) &0.50 (1)&  0.38 (1) \\
       	\cline{2-5}
		      &  \texttt{fse (-1)} & 0.36  (0)&0.20 (1) &0.11 (1)\\
      	\cline{2-5}
        		
		     &   \texttt{fse (-0.5)}& 0.36  (0)&  0.35 (1)& 0.25 (1)\\
                                
         \cline{2-5}
        		
		      &  \texttt{fse (-0.2)} & 0.33 (1) &0.44 (1) &0.33 (1) \\       

         \cline{2-5}
		      & \texttt{macse\_p} & 0.27 (4)& 0.39 (6)&	0.29 (1)\\
     	\hline  \hline
     &  \texttt{needlenuc} & 0.16  (23) &0.35  (15) & -0.36  (1) \\
		      
       	\hline \hline
		&\texttt{needleprot} & 0.38 (0)&-0.12  (0)  &-0.13  (0) \\
        \hline 
	\end{tabular}
        \\
        Normalized pairwise similarity scores and number of FS translation regions computed by the five methods
        for the 3-CDS  manually-built benchmark composed of CDS
        \emph{FAM86C1-002}, \emph{FAM86B1-001} and \emph{FAM86B2-202} (Similarity scores are normalized by dividing them by the lengths of alignments).
        \label{tab:result-fam86}
\end{table}

 \subsubsection*{\bf Fourth strategy: Inferring CDS splicing orthology groups and protein phylogenies}
 
Based on the three CDS used in the previous strategy, CDS 
\emph{FAM86C1-002}
from human gene \emph{ENSG00000158483}, \emph{FAM86B1-001} from human 
gene \emph{ENSG00000186523} and \emph{FAM86B2-202} from human gene 
\emph{ENSG00000145002}, we generated a dataset of three CDS splicing 
orthology groups composed of 
$21$ homologous CDS. Each group contains one of the three initial CDS 
and its  six splicing orthologs in the following set of seven genes from 
gene family I : human genes \emph{ENSG00000158483} denoted $H1$, 
\emph{ENSG00000186523} denoted $H2$ and  \emph{ENSG00000145002} denoted 
$H3$, each containing one of the initial CDS, chimpanzee gene 
\emph{ENSPTRG00000007738} denoted $Ch$, mouse gene 
\emph{ENSMUSG00000022544} denoted $M$, rat gene 
\emph{ENSRNOG0-0000002876} denoted $R$ and cow gene 
\emph{ENSBTAG00000008222} denoted $Co$. The CDS splicing orthologs were 
predicted based on the spliced alignment tool Splign \cite{kapustin2008} 
as follows: for each initial CDS $A_1$ of a gene $A$ and each gene $B$ 
different from $A$, $A_1$ was aligned to $B$ and a putative or existing 
CDS of $B$ ortholog to $A_1$ with the same splicing structure was inferred.
The $21$ resulting CDS are given in Additional file 5. 

We computed the normalized pairwise similarity scores between 
the CDS, using the five alignment methods (the pairwise alignments computed by the five methods with varying \texttt{fs\_open\_cost}  and \texttt{fs\_extend\_cost} are given in the Additional file 5). For each method, we constructed a phylogeny using an UPGMA and a Neighbor-Joining (NJ) algorithm, 
based on the computed CDS similarity matrix. The UPGMA algorithm
was used to classify the CDS into three groups and infer the
similarity relationships between the groups independently of any 
rate of evolution. The NJ algorithm was used to reconstruct the 
phylogeny inside each group. Table \ref{tab:result-phylo-fam86} 
summarizes the results.  The three splicing orthology groups are
denoted \emph{G1} (containing CDS \emph{C1-002}), \emph{G2} 
(containing CDS \emph{B1-001}) and \emph{G3} (containing CDS \emph{B2-202}).

All methods allow to correctly classify the CDS into the three
initial splicing orthology groups G1, G2, and G3. However, the 
\texttt{needleprot} and \texttt{fse} methods are the only methods that 
allow to infer the correct similarity relationships ((G1,G2),G3)
between the groups, confirming the results of the third evaluation 
strategy. For all methods, the CDS phylogeny reconstructed inside the 
group G2 is (Co,((M,R),((H1,Ch),(H2,H3)))) inducing
an evolution of the seven genes with a speciation event at the root
of the gene tree. The phylogeny reconstructed for the groups G1 and G3 is 
((M,R),(Co,((H1,Ch),(H2,H3)))), inducing an evolution of the 
genes with a duplication event at the root of the phylogeny.

\begin{table}[ht!]
\flushleft
	\caption{Similarity relationships between the groups G1, G2 and G3 for the five methods.}
	\begin{tabular}{|l|l|l|l|}
		\hline
	\texttt{fs\_open\_cost}	&\texttt{Method}  &  ((G1,G3),G2) & ((G1,G2),G3)    \\
		\hline \hline
		 \multirow{5}{*}{-10} &  \texttt{fse (-1)}&    &X\\
      		\cline{2-4}
        		
		     &   \texttt{fse (-0.5)}&    &X \\
                                
        	\cline{2-4}
        		
		      &  \texttt{fse (-0.2)} & X  & \\       
       	\cline{2-4}
		      &\texttt{fse0 } &X & \\

         \cline{2-4}
		      & \texttt{macse\_p} & X &  \\
     	\hline  \hline
        
		\multirow{5}{*}{-20} &  \texttt{fse (-1)} &    &X \\
      	\cline{2-4}
        		
		     &   \texttt{fse (-0.5)}&  &X  \\
                                
         \cline{2-4}
        		
		      &  \texttt{fse (-0.2)} &    & X\\       

       	\cline{2-4}
		      &\texttt{fse0 } & X   &  \\
         \cline{2-4}
		      & \texttt{macse\_p} & X  &  \\
     	\hline  \hline

		\multirow{5}{*}{-30} &  \texttt{fse (-1)} &   &X  \\
      	\cline{2-4}
        		
		     &   \texttt{fse (-0.5)}&  &  X\\
                                
         \cline{2-4}
        		
		      &  \texttt{fse (-0.2)} &   & X \\       

       	\cline{2-4}
		      &\texttt{fse0 } &     & X \\
         \cline{2-4}
		      & \texttt{macse\_p} & X  & \\
     	\hline  \hline

     &  \texttt{needlenuc} &   X  &  \\
		      
       	\hline \hline
		&\texttt{needleprot} &  &  X \\
        \hline 
	\end{tabular}
    \\
        Similarity relationships between the splicing orthology groups 
        G1, G2 and G3 computed using the similarity matrices of the five methods for 
        the 21-CDS dataset.
        \label{tab:result-phylo-fam86}
\end{table}

 \subsubsection*{\bf Comparing of the running times}

Table \ref{tab:timeexecution} shows the running times for each of the five methods on the three first gene families of our dataset on a $24 \times 2.1$GHz processor with $10$GB of RAM.  The \texttt{needleprot} method is the fastest, followed by \texttt{macse\_p} and then \texttt{needlenuc}, while \texttt{fse} and \texttt{fse0}  are  the slowest methods. 

 Note that for \texttt{fse}, \texttt{fse0}, \texttt{needlenuc} and \texttt{needleprot}, the used implementations are in Python, while we used a JAVA implementation for \texttt{macse\_p} provided by its authors. This explains the fact that \texttt{macse\_p} is unexpectedly faster here than \texttt{fse}, \texttt{fse0}, and even  \texttt{needlenuc}. Indeed, the five methods share the same asymptotic time complexity, but the exact complexity of each of them is dependent on the number of calls of the main recurrence formulas in an execution, and the number of cases considered in each recurrence formula. The exact computational complexity of the five methods in terms of the lengths $n$ and $m$ of two compared CDS are $12.55\times nm$ for \texttt{fse} and \texttt{fse0} (as shown in the proof of Theorem \ref{thm}), $15\times nm$ for \texttt{macse\_p}, $3\times nm$ for \texttt{needlenuc} and $0.33\times nm$ for \texttt{needleprot}.

\begin{table}[ht!]
\flushleft
\caption{Running time in seconds for each method.}
\begin{tabular}{|l|l|l|l|l|l|}
\hline
   Gene family      & \texttt{fse0}  & \texttt{fse}   &\texttt{macse\_p} & needlenuc & needleprot \\ \hline
$I$ & 299  & 291 & 53 & 97    & 22     \\ \hline
$II$ & 270 & 260 & 45 & 93    & 20     \\ \hline
$III$ & 377 & 389 & 54 & 62    & 20     \\ \hline
\end{tabular}
\\
For each method and gene families I, II, and III, the running time was calculated on the same computer (24 processors of $2.1$GHz each and $10$GB of RAM) with the parameters \texttt{fs\_open\_cost}=$ -20$ and \texttt{fs\_extend\_cost}=$-0.2$. 
\label{tab:timeexecution}
\end{table}

\section*{Conclusions}
\label{conclusion}

In this paper, we introduce a new scoring model for the alignment
of CDS accounting for frameshift translation length. The motivation
for this new scoring scheme is the increasing evidence for protein
divergence through frameshift translation in eukaryotic coding gene
families, calling for automatic methods able to compare, align and
classify CDS while accounting for their codon structure.
The aim of this paper is to validate the necessity of accounting for
frameshift translation length when comparing CDS and show that computing a
maximum score pairwise alignment under the new scoring scheme is possible
in quadratic time complexity.
The results of comparing five
CDS alignment methods for the pairwise alignment of CDS from
ten eukaryotic gene families show that our method is the best compromise
for sets of CDS in which some pairs of CDS display FS translations
while some do not. Future work will make use of benchmarks of CDS
alignments generated manually and by simulation in order to confirm
these experimental results. We also defer to a future work the
extended study of our model's robustness to parameter changes and the 
calibration of its parameters using real data benchmarks.
The perspectives of this work also include the design of a heuristic 
algorithm using local alignment that will achieve scalability for 
large datasets while keeping high accuracy, and
the extension of the method toward multiple alignment. Finally, we plan to apply the algorithms for the discovery of non-annotated frameshifts, and the evaluation of the extent of frameshifts in eukaryotic gene families.


\begin{backmatter}

\section*{Availability of supporting data}
An implementation of the pairwise alignment method  in Python is available at \url{https://github.com/UdeS-CoBIUS/FsePSA}. The dataset used in section Results is available in the Additional files.

\section*{List of abbreviations}

CDS: Coding DNA Sequence;
FS: Frameshift;
NT: nucleotide;
AA: amino acid;
NW: Needleman-Wunsch.

\section*{Declarations} 
\hfill
\section*{Author's contributions}
SJ, EK, FB and AO wrote the program and its documentation. SJ and AO conceived the study and its design. SJ, EK and AR ran the experiments. SJ, EK and AO analyzed and interpreted the data. SJ and AO wrote the manuscript. SJ, EK, MS and AO  critically revised the manuscript. All authors read and approved the final manuscript.
\section*{Acknowledgements}
EK has a scholarship from the Faculty of Science of Université de Sherbrooke. AO is funded by the Canada Research Chair in Computational and Biological Complexity and the Université de Sherbrooke.
\section*{Competing interests}
The authors declare that they have no competing interests.
\section*{Ethics approval and consent to participate}
Not applicable.
\section*{Consent for publication}
Not applicable.
\section*{Funding}
Research funded by the Canada Research Chairs (CRC) (CRC Tier2 Grant 950-230577) and Université de Sherbrooke..


\bibliographystyle{bmc-mathphys} 
\bibliography{biblio}      






%

\section*{Additional Files}

\subsection*{Additional file 1 -- Proof of Lemma \ref{D}}
PDF file containg the detailed proof of Lemma \ref{D}.

\subsection*{Additional file 2 -- CDS of the ten gene families}
Zip file containing the CDS files at the fasta format for each of the ten gene families considered in the Results section.

\subsection*{Additional file 3 -- Additional lines for Tables \ref{tab:fsonly}  and \ref{tab:fsambigu}}
PDF file containing additional lines for Tables \ref{tab:fsonly} (for \texttt{needleprot}) and \ref{tab:fsambigu}  (for \texttt{needlenuc}) of the Results section.

\subsection*{Additional file 4 -- Pairwise alignments for the 3-CDS benchmark}
Zip file containing the sequence file and the pairwise alignment files at the fasta format for the manually-built 3-CDS benchmark considered in the Results section, for each of the five methods and each parameter configuration.

\subsection*{Additional file 5 -- Pairwise alignments for the 21-CDS dataset}
Zip file containing the sequence file and the pairwise alignment files at the fasta format for the 21-CDS benchmark considered in the Results section, for each of the five methods and each parameter configuration.

\end{backmatter}
\end{document}